\newif\ifiopclass
\IfFileExists{iopart.cls}{\iopclasstrue\documentclass[11pt]{iopart}}{\iopclassfalse\documentclass[11pt]{article}}

\usepackage{amssymb,amsfonts,amsmath}
\usepackage{tikz}
\usepackage{enumitem,hyperref}
\usepackage[numbers,sort&compress]{natbib}
\usetikzlibrary{arrows.meta,calc}
\usepackage{amsthm}
\IfFileExists{iopams.sty}{\usepackage{iopams}}{}

\newtheorem{assumption}{Assumption}
\newtheorem{definition}{Definition}

\newtheorem{proposition}{Proposition}
\newtheorem{lemma}{Lemma}
\theoremstyle{remark}         
\newtheorem{remark}{Remark}   

%% You might want to define your own abbreviated commands for common used terms, e.g.:

\begin{document}

\title{Brachistochrone-ruled timelike surfaces in Newtonian and relativistic spacetimes}

\ifiopclass
\author{Ferhat Taş}
\address{İstanbul University, Department of Computer Science, İstanbul 34134, Türkiye}
\ead{tasf@istanbul.edu.tr}
\else
\author{Ferhat Taş\\
\normalsize İstanbul University, Department of Computer Science, İstanbul 34134, Türkiye\\
\texttt{tasf@istanbul.edu.tr}}
\date{}
\maketitle
\fi

\begin{abstract}
We present brachistochrone-ruled timelike surfaces as a natural geometric framework combining ruled-surface geometry with arrival-time minimization: two-dimensional surfaces whose rulings are arrival-time-minimizing timelike curves between paired endpoint families. The paper has two goals: to formalize this framework and to show that it is workable in both flat and curved spacetimes. We begin with a Newtonian toy model built from cycloidal brachistochrones, then pass to stationary Lorentzian spacetimes where the arrival-time problem reduces to a spatial first-order variational problem. In general the reduced integrand is a smooth Lagrangian, while static fixed-energy cases recover Jacobi-type Riemannian metrics. As a consistency check, the bounded-speed Minkowski model yields straight timelike rulings and a planar totally geodesic example. In the Schwarzschild exterior, fixed-energy coordinate-time minimization is reduced to geodesics of an explicit Jacobi metric, and we provide a practical numerical workflow for constructing the associated ruled surfaces. Finally, by linearizing the reduced Euler--Lagrange equations along rulings, we obtain the natural transverse variation equation and boundary data, linking the framework to Jacobi-field geometry and stability questions.
\end{abstract}

\providecommand{\keywords}[1]{\par\noindent\textit{Keywords: }#1\par}
\keywords{brachistochrone, ruled surface, stationary spacetime, variational reduction, Jacobi metric}

\ifiopclass\maketitle\fi

\section{Introduction}
\label{sec:intro}
Geodesics play a central role in both classical differential geometry and general relativity. In a Riemannian manifold they locally minimize length; in a Lorentzian spacetime, timelike geodesics represent the worldlines of freely falling observers and locally extremize the proper time between events. From the variational viewpoint, geodesics are critical points of natural action functionals built from the metric tensor.

From a physical perspective, brachistochrone-ruled timelike surfaces may be interpreted as \emph{time-optimal world sheets} connecting two families of sources and receivers, or two congruences of observers. They encode, in a single geometric object, the set of time-minimizing trajectories which realize the fastest possible transport or signal propagation between corresponding elements of the two families. Such interpretations have direct applications in gravitational physics, for instance in modeling optimal signal paths in curved spacetimes (e.g., near black holes) or analyzing congruences of observers in contexts like gravitational wave detection and wavefront propagation. From a geometric viewpoint, such surfaces interpolate between several known constructions: in flat spacetime with trivial time functional they reduce to timelike surfaces ruled by straight geodesics, while in stationary spacetimes they are related to geodesics of reduced spatial variational structures, which in special regimes are described by Randers/Finsler or Jacobi metrics \citep{BaoChernShen00,Randers41,Zermelo31,BaoRoblesShen04,GibbonsHerdeiroWarnickWerner09,Gibbons16,CaponioJavaloyesSanchez24}.

In many applications, the relevant variational quantity is not length or proper time but \emph{arrival time} measured by a preferred observer family. This leads to brachistochrone-type problems: for a given time functional and two endpoints, find the timelike trajectory with minimal travel time. In the Newtonian case, this starts with the classical cycloidal brachistochrone \citep{Bernoulli1696,Erlichson99}. In stationary relativistic settings, Perlick formulated the corresponding arrival-time principle \citep{Perlick90,Perlick91}, while existence and variational analyses were developed by Giannoni--Piccione and related works \citep{FortunatoGiannoniMasiello95,GiannoniPiccione98,CaponioJavaloyesMasiello11,CaponioJavaloyesSanchez11}. Relativistic refinements of brachistochrone dynamics also appear in \cite{GoldsteinBender86}.

Independently, ruled-surface geometry has a long history in both Riemannian and Lorentzian contexts, including surfaces ruled by geodesics and null geodesics. These surfaces play important roles in geometry and physics (e.g., wavefronts, causal structures, and string-like models).

Motivated by this background, we seek a systematic framework in which each ruling of a smooth two-parameter timelike surface is a time-minimizing trajectory for a moving endpoint pair. This viewpoint is particularly useful in stationary spacetimes, where arrival-time minimization admits reduced spatial Lagrangian or Jacobi descriptions.

The aim of this paper is to combine these two variational ideas into a systematic construction of \emph{brachistochrone-ruled timelike surfaces} in classical and relativistic spacetimes. Roughly speaking, given two one-parameter families of events or observers, we consider for each parameter value the time-minimizing timelike curve (brachistochrone) connecting the corresponding pair of endpoints, with respect to a chosen time functional. The union of these time-minimizing worldlines forms a two-dimensional timelike surface in spacetime, ruled by brachistochronal rulings. Our goal is to formalize this construction, to relate it to known reductions of arrival-time variational problems to spatial Lagrangian formulations (and to Jacobi/Finsler metrics in special cases), and to explore its geometry in simple but illustrative examples.

\medskip
The main contributions of this work can be summarized as follows:
\begin{itemize}
  \item We introduce a precise definition of
        \emph{relativistic brachistochrone-ruled timelike surfaces}
        in stationary Lorentzian spacetimes. Using reductions of
        arrival-time functionals to spatial variational problems, we
        characterize the rulings as extremals of the reduced spatial
        functional and formulate natural regularity and uniqueness
        assumptions for families of brachistochrones. In static
        fixed-energy settings this specializes to Jacobi geodesics.
  \item To build intuition, we first construct a fully explicit
        Newtonian toy model. In a classical constant-gravity field we
        use cycloidal brachistochrones as rulings and obtain a
        ``brachistochrone-ruled surface'' in Newtonian spacetime.
        This provides a concrete variational picture in which the
        rulings are honest time-minimizing curves with respect to the
        classical brachistochrone functional.
  \item In $(1+2)$-dimensional Minkowski spacetime we consider a simple
        arrival-time functional based on a uniform bound on the
        spatial speed relative to static observers. We show that the
        corresponding relativistic brachistochrones are straight
        timelike lines and construct explicit brachistochrone-ruled
        timelike surfaces connecting two families of stationary
        observers. A particularly simple choice of boundary curves
        yields a timelike affine plane which is totally geodesic.
  \item As a first non-trivial curved example, we study the Schwarzschild
        exterior as a static spacetime. For timelike geodesics at fixed
        energy we derive a Jacobi-type Riemannian metric on the
        equatorial spatial slice and explain how time-minimizing
        timelike geodesics (with respect to coordinate time) reduce to
        length-minimizing geodesics of this Jacobi metric. We then
        outline a numerical scheme for constructing brachistochrone-ruled
        timelike surfaces between two families of endpoints by solving
        boundary-value problems for the Jacobi geodesics.
  \item Finally, we discuss the differential geometry of
        brachistochrone-ruled timelike surfaces, emphasizing the role
        of the induced Lorentzian metric, the second fundamental form
        and curvature invariants. We comment on the relation to
        conjugate points, cut loci and caustics of the underlying
        time-minimizing geodesics, and we point out several directions
        for further work.
\end{itemize}
\medskip
The paper is organized as follows. Section~\ref{sec:materials_methods}
develops the Newtonian toy model and gives an explicit
brachistochrone-ruled surface built from cycloids. In
Section~\ref{sec:relativistic_brachistochrone_ruled} we formulate the
stationary relativistic framework and define brachistochrone-ruled
timelike surfaces via reduced spatial variational problems.
Section~\ref{sec:minkowski_example} provides a flat-spacetime
consistency check, while Section~\ref{sec:schwarzschild_example}
contains the Schwarzschild fixed-energy Jacobi reduction together with
a practical numerical construction pipeline. In
Section~\ref{sec:geometry_brachistochrone_ruled} we give a more
quantitative geometric analysis (induced metric, second fundamental
form, curvature identities, and Jacobi-field boundary data). We
conclude in Section~\ref{sec:conclusion} with open problems and future
directions.

\section{Newtonian toy model: cycloidal brachistochrone rulings}
\label{sec:materials_methods}

\begin{figure}
    \centering
    \includegraphics[width=0.8\linewidth]{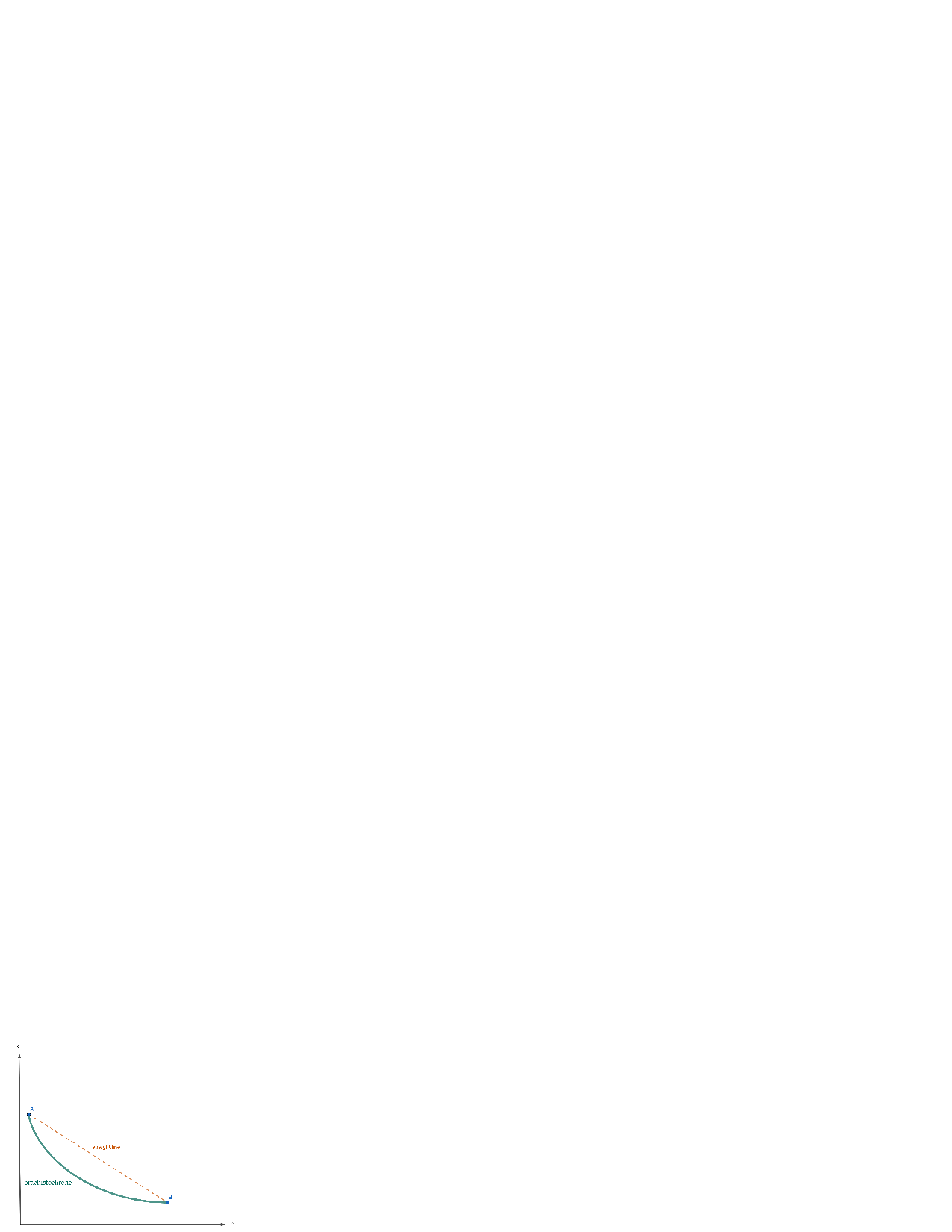}
    \caption{Standard brachistochrone curve connecting two points in a uniform gravitational field.}
    \label{fig:newton_brachistochrone}
\end{figure}

This section presents a Newtonian toy model that captures the core ruled-surface construction and serves as a template for the relativistic setting.

\subsection{Classical brachistochrones revisited}

We recall the classical planar brachistochrone.
Let $(x,z)$ be Cartesian coordinates, with $z$ measured downward in a constant gravitational field of strength $g>0$. A frictionless particle moves under the potential
\begin{equation}
  V(z) = g z,
\end{equation}
starting from rest at
\begin{equation}
  A = (0,0)
\end{equation}
and reaching
\begin{equation}
  B = (X,H), \qquad X>0,\ H>0.
\end{equation}

Let $\gamma:[0,1]\to\mathbb{R}^2$, $\gamma(\lambda)=(x(\lambda),z(\lambda))$
be a smooth curve connecting $A$ to $B$.
If the particle has no initial velocity, conservation of mechanical energy gives
\begin{equation}
  \frac{1}{2} v^2(\lambda) = g z(\lambda), \qquad
  v(\lambda) = \sqrt{2 g\,z(\lambda)},
\end{equation}
where $v=\|\dot\gamma\|$ is the speed. The travel time along $\gamma$ is
\begin{equation}
  T[\gamma]
  \;=\; \int_0^1 \frac{\mathrm{d}s}{v}
  \;=\; \int_0^1 \frac{\sqrt{\dot x(\lambda)^2 + \dot z(\lambda)^2}}
                         {\sqrt{2 g\, z(\lambda)}}\,\mathrm{d}\lambda.
  \label{eq:brachistochrone_functional}
\end{equation}

The minimizers of~\eqref{eq:brachistochrone_functional} are cycloids: there exist parameters $a>0$ and $\theta_1>0$ such that
\begin{equation}
  \begin{aligned}
    x(\theta) &= a\bigl(\theta - \sin\theta\bigr),\\
    z(\theta) &= a\bigl(1 - \cos\theta\bigr),
  \end{aligned}
  \qquad 0 \leq \theta \leq \theta_1,
  \label{eq:cycloid_param}
\end{equation}
with the endpoint conditions
\begin{equation}
  a\bigl(\theta_1 - \sin\theta_1\bigr) = X,
  \qquad
  a\bigl(1 - \cos\theta_1\bigr) = H.
  \label{eq:cycloid_boundary_conditions}
\end{equation}

Along the cycloid~\eqref{eq:cycloid_param} the speed is
\begin{equation}
  v(\theta) = \sqrt{2 g\, z(\theta)} 
  = \sqrt{2 g a\bigl(1 - \cos\theta\bigr)}.
\end{equation}
Direct computation gives
\begin{equation}
  \mathrm{d}s
  \;=\; \sqrt{\bigl(x'(\theta)\bigr)^2 + \bigl(z'(\theta)\bigr)^2}\,\mathrm{d}\theta
  \;=\; 2 a \sin\frac{\theta}{2}\,\mathrm{d}\theta,
\end{equation}
\begin{equation}
  \mathrm{d}t = \frac{\mathrm{d}s}{v}
  = \frac{2 a \sin(\theta/2)\,\mathrm{d}\theta}
         {\sqrt{2 g a(1-\cos\theta)}}
  = \sqrt{\frac{a}{g}}\,\mathrm{d}\theta.
\end{equation}
Hence the time coordinate is linear in $\theta$:
\begin{equation}
  t(\theta) = \sqrt{\frac{a}{g}}\;\theta, \qquad t(0)=0.
  \label{eq:time_param_cycloid}
\end{equation}

In Newtonian spacetime $\mathbb{R}^4$ with coordinates $(t,x,z,y)$, this gives the worldline
\begin{equation}
  \gamma(\theta)
  = \bigl( t(\theta), x(\theta), z(\theta), y_0 \bigr)
  = \Bigl(
        \sqrt{\tfrac{a}{g}}\;\theta,\;
        a(\theta - \sin\theta),\;
        a(1 - \cos\theta),\;
        y_0
    \Bigr),
  \label{eq:single_worldline}
\end{equation}
where the additional coordinate $y_0\in\mathbb{R}$ is constant (Figure~\ref{fig:newton_brachistochrone}).

\subsection{From a family of endpoint pairs to a brachistochrone-ruled surface}

We now use these classical solutions to build a two-dimensional surface ruled by brachistochrone worldlines.

Let $I\subset\mathbb{R}$ be a non-empty open interval and let
\begin{equation}
  X, H : I \longrightarrow (0,\infty)
\end{equation}
be smooth functions. We define two spatial curves in $\mathbb{R}^3$ by
\begin{equation}
  \Gamma_0 : I \to \mathbb{R}^3,\quad
  \Gamma_0(s) = (x,z,y) = \bigl(0,0,s\bigr),
  \label{eq:gamma0}
\end{equation}
\begin{equation}
  \Gamma_1 : I \to \mathbb{R}^3,\quad
  \Gamma_1(s) = (x,z,y) = \bigl(X(s), H(s), s\bigr).
  \label{eq:gamma1}
\end{equation}
For each $s\in I$, the points
\begin{equation}
  A_s = (0,0),\qquad B_s = \bigl(X(s), H(s)\bigr)
\end{equation}
define a brachistochrone endpoint pair in the $(x,z)$-plane.

\begin{assumption}
\label{ass:brachistochrone_existence_uniqueness}
For every $s\in I$ there exists a unique cycloidal brachistochrone
connecting $A_s$ to $B_s$, determined by parameters
$a(s)>0$ and $\theta_1(s)>0$ such that
\begin{equation}
  a(s)\bigl(\theta_1(s) - \sin\theta_1(s)\bigr) = X(s),
  \qquad
  a(s)\bigl(1 - \cos\theta_1(s)\bigr) = H(s).
  \label{eq:a_theta1_equations}
\end{equation}
Moreover, the functions $a,\theta_1 : I\to(0,\infty)$ are smooth.
\end{assumption}

Under Assumption~\ref{ass:brachistochrone_existence_uniqueness}, the corresponding time-minimizing trajectory in Newtonian spacetime is
\begin{equation}
  \gamma_s : [0,\theta_1(s)] \to \mathbb{R}^4,
\end{equation}
where
\begin{align}
  \gamma_s(\theta)
  &= \bigl(t_s(\theta), x_s(\theta), z_s(\theta), y_s(\theta)\bigr) \label{eq:gamma_s_theta} \\
  &= \Bigl(
        \sqrt{\tfrac{a(s)}{g}}\;\theta,\;
        a(s)\bigl(\theta - \sin\theta\bigr),\;
        a(s)\bigl(1 - \cos\theta\bigr),\;
        s
     \Bigr).
\end{align}

for $0\leq \theta \leq \theta_1(s)$.

Introduce the normalized parameter $u\in[0,1]$ by
\begin{equation}
  \theta(s,u) := u\,\theta_1(s).
\end{equation}
Define the two-parameter map
\begin{equation}
  \Sigma : I\times[0,1] \longrightarrow \mathbb{R}^4,
  \qquad
  \Sigma(s,u) := \gamma_s\bigl(\theta(s,u)\bigr).
  \label{eq:Sigma_definition}
\end{equation}
Explicitly,
\begin{equation}
  \Sigma(s,u)
  = \bigl( t(s,u), x(s,u), z(s,u), y(s,u) \bigr),
\end{equation}
with
\begin{equation}
  \begin{aligned}
    \theta(s,u) &= u\,\theta_1(s),\\[2pt]
    x(s,u) &= a(s)\bigl(\theta(s,u) - \sin\theta(s,u)\bigr),\\[2pt]
    z(s,u) &= a(s)\bigl(1 - \cos\theta(s,u)\bigr),\\[2pt]
    t(s,u) &= \sqrt{\tfrac{a(s)}{g}}\;\theta(s,u)
            = \sqrt{\tfrac{a(s)}{g}}\;u\,\theta_1(s),\\[2pt]
    y(s,u) &= s.
  \end{aligned}
  \label{eq:Sigma_components}
\end{equation}

\begin{figure}
    \centering
    \includegraphics[width=0.9\linewidth]{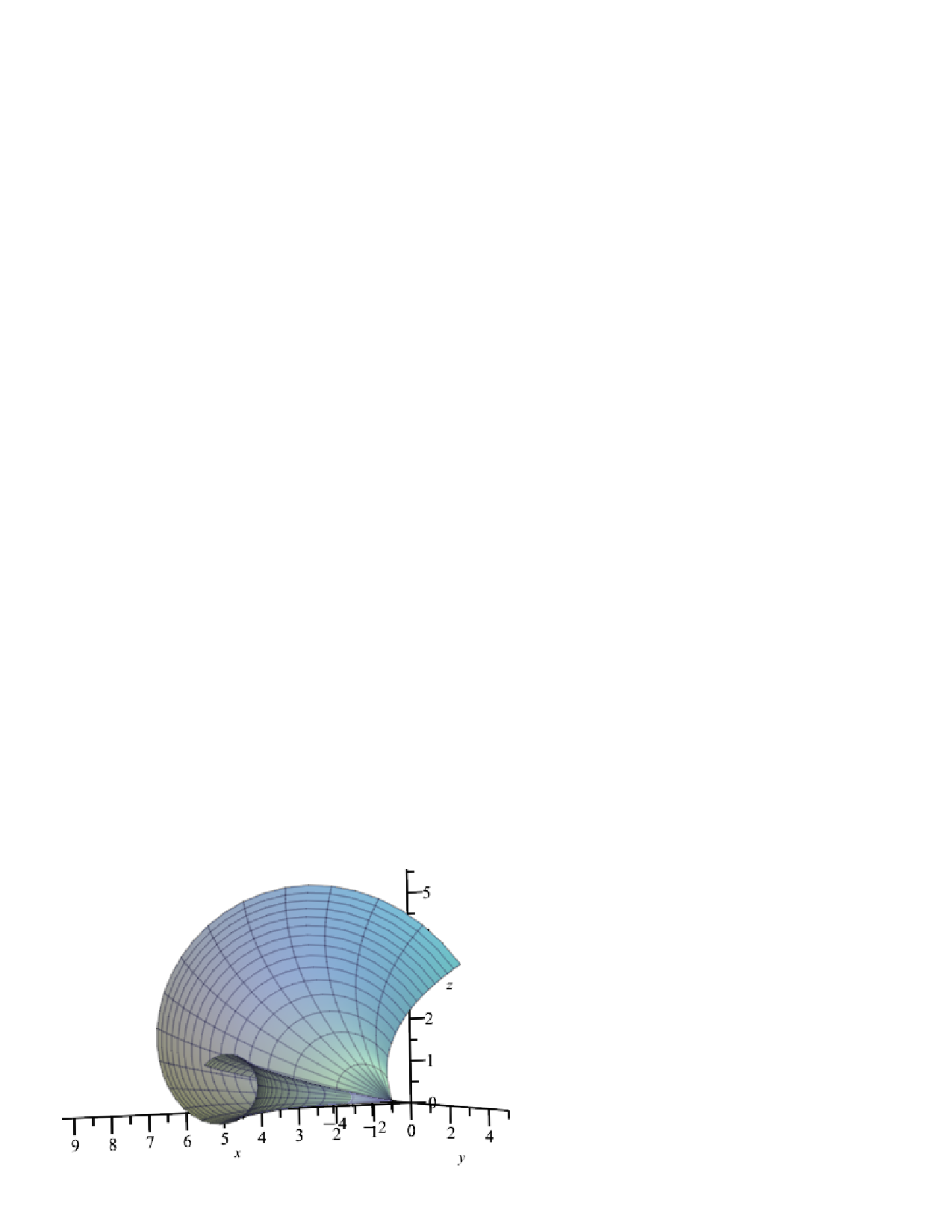}
    \caption{Brachistochrone-ruled surface in Newtonian spacetime.}
    \label{fig:newton_worldsheet}
\end{figure}

\begin{definition}
\label{def:brachistochrone_ruled_newtonian}
A smooth map $\Sigma : U\subset\mathbb{R}^2 \to \mathbb{R}^4$ is called
a \emph{brachistochrone-ruled surface} if there exist smooth functions
$a(s)>0$, $\theta_1(s)>0$ and endpoint curves $\Gamma_0,\Gamma_1$ as in
\eqref{eq:gamma0}–\eqref{eq:gamma1} such that, in suitable local
coordinates $(s,u)$ on $U$, the map $\Sigma$ takes the form
\eqref{eq:Sigma_definition}–\eqref{eq:Sigma_components}, and for each fixed
$s$ the curve $u\mapsto \Sigma(s,u)$ is a time-minimizing trajectory
between $\Gamma_0(s)$ and $\Gamma_1(s)$ with respect to the functional
\eqref{eq:brachistochrone_functional}. (illustrated in Figure~\ref{fig:newton_worldsheet}).

\end{definition}

Here $s$ labels the rulings and $u$ parametrizes points along each ruling; $\Gamma_0$ and $\Gamma_1$ are the boundary curves. Under Assumption~\ref{ass:brachistochrone_existence_uniqueness}, this yields a smooth brachistochrone-ruled surface from a smooth family of endpoint pairs.

\begin{figure}
    \centering
    \includegraphics[width=0.9\linewidth]{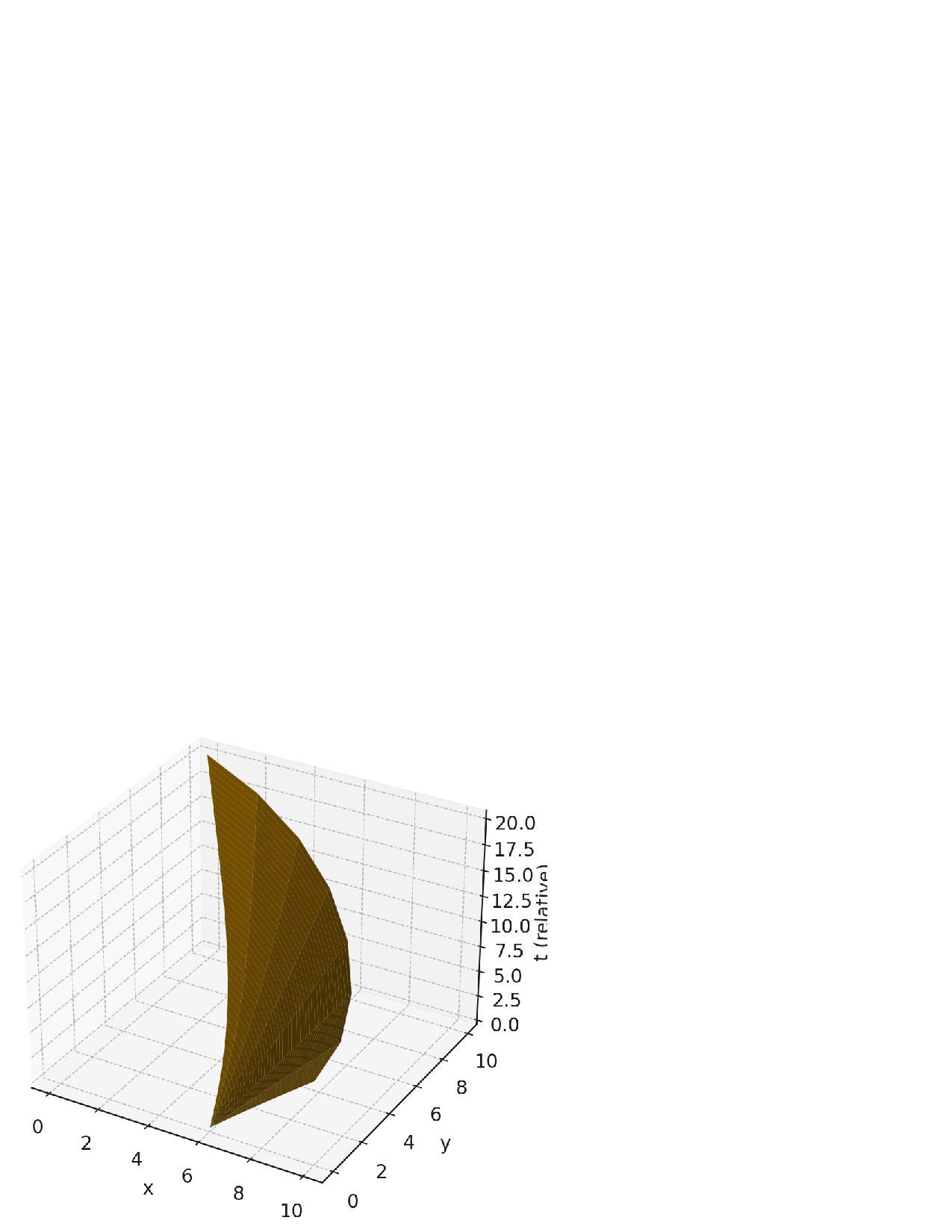}
    \caption{Brachistochrone-ruled surface in the Schwarzschild exterior.}
    \label{fig:schwarzschild_worldsheet}
\end{figure}

\begin{remark}
In this Newtonian model, brachistochrone worldlines are geodesics of an effective Riemannian metric induced by~\eqref{eq:brachistochrone_functional}. Section~\ref{sec:relativistic_brachistochrone_ruled} extends Definition~\ref{def:brachistochrone_ruled_newtonian} to stationary Lorentzian spacetimes via reduced spatial variational structures (Jacobi/Finsler in special regimes).
\end{remark}

\section{Relativistic framework in stationary spacetimes}
\label{sec:relativistic_brachistochrone_ruled}

In this section we generalize the Newtonian construction of
Section~\ref{sec:materials_methods} to stationary Lorentzian spacetimes.
The key point is that, in a stationary spacetime, suitable
``time functionals'' for timelike curves can be reduced to
spatial variational functionals on a spatial manifold.
Relativistic brachistochrones then appear as extremals of these
reduced functionals, and brachistochrone-ruled timelike surfaces can be defined in direct analogy with the Newtonian toy model.

\subsection{Stationary spacetimes and reduction to spatial variational functionals}

Let $(M,g)$ be a time-oriented Lorentzian manifold of dimension $n+1$ and signature $(-,+,\dots,+)$. 
We assume $(M,g)$ is \textit{stationary}, i.e. there exists a timelike Killing vector field $K$ on $M$.
In adapted coordinates $(t,x) \in \mathbb{R} \times N$, where $N$ is an $n$-dimensional spatial manifold,
the metric can be written in the standard stationary form
\begin{equation}
    g = -\beta(x)\bigl(\mathrm{d}t - \theta_i(x)\,\mathrm{d}x^i\bigr)^2 + h_{ij}(x)\,\mathrm{d}x^i\mathrm{d}x^j,
\end{equation}

where $\beta>0$ is the lapse function, $\theta=\theta_i\,\mathrm{d}x^i$ is a 1-form (the shift),
and $h_{ij}$ is a Riemannian metric on $N$. The Killing field is $K=\partial_t$.

Consider a future-directed timelike curve $\gamma:[0,1]\to M$, $\gamma(\lambda)=(t(\lambda),x(\lambda))$,
connecting two events with fixed spatial endpoints $x(0)=x_0$, $x(1)=x_1$. 
We regard the coordinate $t$ as the time measured by the stationary observers
whose worldlines are the integral curves of $K$. The arrival time of $\gamma$ is
\begin{equation}
    \Delta t[\gamma] = t(1)-t(0) = \int_0^1 \dot{t}(\lambda)\,\mathrm{d}\lambda.
\end{equation}

In order to obtain a well-posed variational problem, we must impose a normalization condition on $\gamma$.
Two natural choices are:
\begin{enumerate}
\item[(a)] \textit{Proper-time parametrization}: $g(\dot{\gamma},\dot{\gamma})=-1$.
\item[(b)] \textit{Fixed energy}: $E:=-g(K,\dot{\gamma})$ is prescribed.
\end{enumerate}
Both conditions lead, after algebraic manipulation, to reduced spatial
Lagrangians for the arrival-time functional on $N$.

\subsubsection*{Derivation for proper-time parametrization}

Assume $\gamma$ is parametrized by proper time, so that
\begin{equation}
    g(\dot{\gamma},\dot{\gamma}) = -1.
\end{equation}

Substituting the coordinate expressions $\dot{\gamma}=(\dot{t},\dot{x}^i)$ and the metric (25) yields
\begin{equation}
    -\beta(x)\bigl(\dot{t} - \theta_i(x)\dot{x}^i\bigr)^2 + h_{ij}(x)\dot{x}^i\dot{x}^j = -1.
\end{equation}

Solving (28) for $\dot{t}$ and choosing the positive root for future-directed curves, gives
\begin{equation}
    \dot{t} = \theta_i(x)\dot{x}^i + \frac{1}{\sqrt{\beta(x)}}\sqrt{1 + h_{ij}(x)\dot{x}^i\dot{x}^j}.
\end{equation}

Inserting (29) into the arrival-time functional (26) we obtain
\begin{equation}
    \Delta t[\gamma] = \int_0^1\Bigl[ \theta_i(x)\dot{x}^i + \frac{1}{\sqrt{\beta(x)}}\sqrt{1 + h_{ij}(x)\dot{x}^i\dot{x}^j} \Bigr]\,\mathrm{d}\lambda.
\end{equation}

Thus, if we define the reduced Lagrangian $L_{\mathrm{pt}}:TN\to\mathbb{R}$ by
\begin{equation}
    L_{\mathrm{pt}}(x,v) := \theta_i(x)v^i + \frac{1}{\sqrt{\beta(x)}}\sqrt{1 + h_{ij}(x)v^iv^j},
\end{equation}

\noindent then $\Delta t[\gamma] = \int_0^1 L_{\mathrm{pt}}(x(\lambda),\dot{x}(\lambda))\,\mathrm{d}\lambda$. 
The function $L_{\mathrm{pt}}$ is smooth and fibrewise strictly convex on the admissible timelike region, but it is not positively homogeneous of degree one in $v$ because of the additive constant inside the square root. Hence this is a genuine first-order Lagrangian formulation rather than a Finsler norm.

\subsubsection{Derivation for fixed energy}

Alternatively, let us impose the condition that the energy
\begin{equation}
    E := -g(K,\dot{\gamma}) = \beta(x)\bigl(\dot{t}-\theta_i(x)\dot{x}^i\bigr)
\end{equation}

\noindent is a prescribed constant along $\gamma$. From (32) we have
\begin{equation}
    \dot{t} = \theta_i(x)\dot{x}^i + \frac{E}{\beta(x)}.
\end{equation}

Substituting (33) into the expression for $g(\dot{\gamma},\dot{\gamma})$ (without assuming $\dot{\gamma}$ is unit) yields
\begin{equation}
    g(\dot{\gamma},\dot{\gamma}) = -\frac{E^2}{\beta(x)} + h_{ij}(x)\dot{x}^i\dot{x}^j.
\end{equation}

If we require $\gamma$ to be timelike, then $g(\dot{\gamma},\dot{\gamma})<0$, which implies
$h_{ij}(x)\dot{x}^i\dot{x}^j < E^2/\beta(x)$. 
The arrival time becomes
\begin{equation}
    \Delta t[\gamma] = \int_0^1\Bigl[ \theta_i(x)\dot{x}^i + \frac{E}{\beta(x)} \Bigr]\,\mathrm{d}\lambda.
\end{equation}

However, to relate $\mathrm{d}\lambda$ to the spatial geometry we must fix a parametrization.
A convenient choice is to use the parameter $\lambda$ such that
\begin{equation}
    h_{ij}(x)\dot{x}^i\dot{x}^j = \frac{E^2}{\beta(x)} - 1,
\end{equation}

\noindent which corresponds to a particular choice of proper-time scaling. 
Then (35) can be rewritten as
\begin{equation}
    \Delta t[\gamma] = \int_0^1 \frac{E}{\beta(x)}\Bigl[ 1 + \frac{\beta(x)}{E}\,\theta_i(x)\dot{x}^i \Bigr]\,\mathrm{d}\lambda.
\end{equation}

Defining the reduced Lagrangian
\begin{equation}
    L_E(x,v) := \frac{E}{\beta(x)}\Bigl( 1 + \frac{\beta(x)}{E}\,\theta_i(x)v^i \Bigr)
           = \theta_i(x)v^i + \frac{E}{\beta(x)},
\end{equation}

\noindent we again obtain $\Delta t[\gamma]=\int_0^1 L_E(x,\dot{x})\,\mathrm{d}\lambda$, 
provided the parametrization satisfies (36). 
The integrand $L_E$ is affine in $v$ and therefore not a Finsler norm.
In special cases one can further eliminate the parameter using first
integrals to obtain a homogeneous length functional; in particular, in
the static case ($\theta=0$) one recovers a Riemannian Jacobi metric, as
we will see in Section~\ref{sec:schwarzschild_example}.

\textbf{General reduction statement}: The two derivations above illustrate the following general fact, which we state as a precise proposition.

\begin{proposition}
\label{prop:Finsler-reduction}
Let $(M,g)$ be a stationary spacetime with metric of the form (25), and let $C$ be a class of
future-directed timelike curves $\gamma(\lambda)=(t(\lambda),x(\lambda))$ satisfying a prescribed
normalization condition (for example, proper-time parametrization $g(\dot{\gamma},\dot{\gamma})=-1$,
or fixed energy $E=-g(K,\dot{\gamma})$) and fixed spatial endpoints $x(0)=x_0$, $x(1)=x_1$.

Then there exists a reduced first-order Lagrangian
\[
L:\mathcal{A}\subset TN\longrightarrow\mathbb{R},
\]
defined on an admissible subset $\mathcal{A}$ of $TN$, such that:
\begin{enumerate}
\item[(i)] For every $\gamma\in C$, its spatial projection $\sigma:=\pi\circ\gamma:[0,1]\to N$ satisfies
\begin{equation}
    \Delta t[\gamma] = \int_0^1 L(\sigma(\lambda),\dot{\sigma}(\lambda))\,\mathrm{d}\lambda.
\end{equation}
           
\item[(ii)] A curve $\gamma\in C$ is a critical point of the arrival-time functional $\Delta t$
            (with fixed endpoints) if and only if its spatial projection $\sigma$ is a geodesic
            extremal of the reduced spatial variational problem, i.e. a critical point of the functional
            \begin{equation}
                 T[\sigma] = \int_0^1 L(\sigma(\lambda),\dot{\sigma}(\lambda))\,\mathrm{d}\lambda.
            \end{equation}
\item[(iii)] If, after using the normalization constraint and a suitable reparametrization,
             $L$ can be rewritten as a positive 1-homogeneous function $F$, then
             the same extremals are geodesics of the corresponding Finsler/Jacobi metric.
\end{enumerate}
For timelike problems, case (iii) is model-dependent; in this paper the
relevant homogeneous reduction is the static fixed-energy Jacobi case.
\end{proposition}

\begin{remark}
In stationary relativity, genuine Randers/Finsler reductions are classical
in Fermat-type (lightlike) arrival-time problems and related navigation
formulations \citep{Perlick00,BaoRoblesShen04,GibbonsHerdeiroWarnickWerner09}.
For timelike brachistochrones, however, one should generally work with the
reduced Lagrangian $L$ unless an additional homogeneous reduction is
available (as in the static Jacobi setting of Section~\ref{sec:schwarzschild_example}).
\end{remark}

Motivated by Proposition~\ref{prop:Finsler-reduction}, we adopt the
following terminology.

\begin{definition}
\label{def:relativistic_brachistochrone}
Let $(M,g)$ be a stationary spacetime and $L$ the associated reduced
spatial Lagrangian on $N$ as in Proposition~\ref{prop:Finsler-reduction}.
Let $p,q\in M$ be two events whose spatial projections
$x_p,x_q\in N$ lie in the same connected component of $N$.
A future-directed timelike curve $\gamma$ from $p$ to $q$,
$\gamma\in\mathcal{C}$, is called a relativistic brachistochrone between $p$ and $q$ relative to the stationary observers $K$ and the chosen normalization if $\gamma$ is a critical point of the arrival-time functional $\Delta t$; equivalently, if its spatial projection $\sigma=\pi\circ\gamma$ is a critical curve of
\begin{equation}
  T[\sigma]=\int_0^1 L(\sigma(\lambda),\dot\sigma(\lambda))\,\mathrm{d}\lambda
\end{equation}
with fixed endpoints $x_p,x_q$. In cases where $L$ admits a homogeneous
representation, this is equivalently a geodesic condition for the
corresponding Finsler/Jacobi metric.
\end{definition}

\subsection{Families of endpoints and brachistochrone-ruled timelike surfaces}

We now introduce the relativistic analogue of the
brachistochrone-ruled surface constructed in Section~\ref{sec:materials_methods}. Let $I\subset\mathbb{R}$ be a non-empty
open interval, and let
\begin{equation}
  \alpha_0,\alpha_1 : I \longrightarrow N
\end{equation}
be two smooth curves in the spatial manifold $N$, which we regard as families of spatial endpoints. For each $s\in I$, the points
$\alpha_0(s)$ and $\alpha_1(s)$ represent the spatial positions of
two events to be connected by a time-optimal worldline.

In order to lift these spatial curves to curves in spacetime $M$, one may fix a reference time $t_0\in\mathbb{R}$ and consider the embeddings
\begin{equation}
  \beta_0(s) = (t_0,\alpha_0(s)),\qquad
  \beta_1(s) = (t_0,\alpha_1(s)),
\end{equation}
or, more generally, allow for smooth time functions $t_0(s)$, $t_1(s)$ and set
\begin{equation}
  \beta_0(s) = \bigl(t_0(s),\alpha_0(s)\bigr),\qquad
  \beta_1(s) = \bigl(t_1(s),\alpha_1(s)\bigr).
\end{equation}

We impose the following assumption, which is a natural regularity and uniqueness requirement on the relativistic brachistochrone problem with varying endpoints.

\begin{assumption}
\label{ass:smooth_family_brachistochrones}
Let $(M,g)$ be a stationary spacetime with associated reduced spatial
Lagrangian $L$ on $N$ as in Proposition~\ref{prop:Finsler-reduction}. Assume that
for every $s\in I$ there exists a unique future-directed timelike
relativistic brachistochrone
\begin{equation}
  \gamma_s : [0,1] \longrightarrow M,
\end{equation}
connecting $\beta_0(s)$ to $\beta_1(s)$ in the sense of
Definition~\ref{def:relativistic_brachistochrone}, and that the map
\begin{equation}
  (s,\lambda) \longmapsto \gamma_s(\lambda)
\end{equation}
is smooth on $I\times[0,1]$.
\end{assumption}

Under Assumption~\ref{ass:smooth_family_brachistochrones} we define
a map
\begin{equation}
  \Sigma : I\times[0,1] \longrightarrow M,\qquad
  \Sigma(s,u) := \gamma_s(u),
  \label{eq:Sigma_relativistic}
\end{equation}
which parametrizes the union of all brachistochrones $\gamma_s$.

\begin{definition}
\label{def:relativistic_brachistochrone_ruled}
Let $(M,g)$ be a stationary spacetime and let
$\Sigma : U\subset\mathbb{R}^2 \to M$ be a smooth map such that
$\Sigma$ is an immersion and $\Sigma(U)$ is a timelike surface
(i.e.\ the induced metric has Lorentzian signature). We say that
$\Sigma$ is a \emph{relativistic brachistochrone-ruled timelike surface}
if there exists:

\begin{itemize}
  \item an open interval $I\subset\mathbb{R}$ and a diffeomorphism
        $\Phi : I\times[0,1] \to U$,
  \item smooth curves $\beta_0,\beta_1 : I\to M$,
  \item a family $\{\gamma_s\}_{s\in I}$ of future-directed timelike
        relativistic brachistochrones connecting $\beta_0(s)$ to
        $\beta_1(s)$ and satisfying
        Assumption~\ref{ass:smooth_family_brachistochrones},
\end{itemize}
such that, in the coordinates $(s,u)=\Phi^{-1}(p)$,
\begin{equation}
  \Sigma\bigl(\Phi(s,u)\bigr) = \gamma_s(u),\qquad
  (s,u)\in I\times[0,1].
\end{equation}

In other words, for each fixed $s\in I$ the curve
$u\mapsto \Sigma\bigl(\Phi(s,u)\bigr)$ is a relativistic brachistochrone
connecting two boundary curves on the surface, and the surface is ruled
by this one-parameter family of time-minimizing worldlines.
\end{definition}

\begin{remark}
Note that in the Newtonian toy model of
Section~\ref{sec:materials_methods}, the stationary spacetime is
$\mathbb{R}^4$ endowed with the standard Newtonian time coordinate
and a gravitational potential encoded in the travel-time functional.
The reduced spatial functional coincides with the classical
brachistochrone variational problem (equivalently, with a Jacobi-type
Riemannian reformulation whose geodesics are cycloids), and
Definition~\ref{def:relativistic_brachistochrone_ruled}
recovers Definition~\ref{def:brachistochrone_ruled_newtonian} as a
formal limit.
\end{remark}

\begin{remark}
The differential geometry of relativistic brachistochrone-ruled
timelike surfaces can be studied by pulling back the Lorentzian metric
$g$ via $\Sigma$ and computing the induced metric, second fundamental form and curvature invariants. Under additional hypotheses on the family of brachistochrones, one can introduce relativistic analogues of classical quantities such as the striction curve and the distribution parameter for ruled surfaces. We will return to these aspects in Section~\ref{sec:geometry_brachistochrone_ruled}.
\end{remark}

\section{Minkowski consistency check}
\label{sec:minkowski_example}
\label{sec:consistency_check}

This section is intentionally brief and serves as a \emph{consistency
check} of the general framework developed in
Section~\ref{sec:relativistic_brachistochrone_ruled}. In flat
spacetime, the reduced variational problem should reproduce the
expected elementary time-optimal behavior.

Consider $(1+2)$-dimensional Minkowski spacetime
\begin{equation}
  M=\mathbb{R}^{1+2},\qquad
  g=-\mathrm dt^2+\mathrm dx^2+\mathrm dz^2.
\end{equation}
For a future-directed timelike curve
$\gamma(t)=(t,\mathbf{x}(t))$ with
$\mathbf{x}(t)=(x(t),z(t))$, define the arrival-time functional
\begin{equation}
  \Delta t=t_1-t_0,
\end{equation}
under the speed constraint
\begin{equation}
  \|\dot{\mathbf{x}}(t)\|\le v_0<1.
\end{equation}
If $\mathbf{x}(t_0)=\mathbf{x}_0$ and
$\mathbf{x}(t_1)=\mathbf{x}_1$, then
\begin{equation}
  \|\mathbf{x}_1-\mathbf{x}_0\|
  \le \int_{t_0}^{t_1}\|\dot{\mathbf{x}}(t)\|\,\mathrm dt
  \le v_0\,\Delta t.
\end{equation}
Hence
\begin{equation}
  \Delta t\ge \frac{\|\mathbf{x}_1-\mathbf{x}_0\|}{v_0},
\end{equation}
with equality iff the path is a straight segment traversed with
constant maximal speed $v_0$. Therefore the Minkowski brachistochrone
is a straight timelike line, exactly as expected.

Now let two smooth endpoint families
$\alpha_0,\alpha_1:I\to\mathbb{R}^2$ be given, and for each $s\in I$
set
\begin{equation}
  \Delta t_s:=\frac{\|\alpha_1(s)-\alpha_0(s)\|}{v_0}.
\end{equation}
The corresponding ruling is
\begin{equation}
  \gamma_s(u)=\Bigl(t_0(s)+u\,\Delta t_s,\,
  (1-u)\alpha_0(s)+u\alpha_1(s)\Bigr),\qquad u\in[0,1].
\end{equation}
Their union
\begin{equation}
  \Sigma(s,u):=\gamma_s(u)
\end{equation}
defines a brachistochrone-ruled timelike surface in Minkowski
spacetime. In particular, when $\alpha_0$ and $\alpha_1$ are affine in
$s$, the image is a planar timelike surface, hence totally geodesic in
flat spacetime.

This verifies that the general construction recovers the expected flat
limit and supports the consistency of the relativistic formulation.

\section{Schwarzschild example and numerical construction pipeline}
\label{sec:schwarzschild_example}

In this section we illustrate how the general framework of
Section~\ref{sec:relativistic_brachistochrone_ruled} can be applied
to a non-trivial static spacetime, namely the exterior region of the
Schwarzschild solution. We do not attempt to solve the brachistochrone
problem in closed form; instead, we show how the problem reduces to
geodesics of an effective Riemannian (Jacobi/optical) metric on a
spatial slice and outline a concrete numerical scheme for constructing
brachistochrone-ruled timelike surfaces.

\subsection{Schwarzschild exterior as a static spacetime}

Consider the Schwarzschild spacetime of mass parameter $M>0$ in
standard Schwarzschild coordinates $(t,r,\vartheta,\varphi)$:
\begin{equation}
  g = -\Bigl(1 - \frac{2M}{r}\Bigr)\,\mathrm{d}t^2
      + \Bigl(1 - \frac{2M}{r}\Bigr)^{-1}\mathrm{d}r^2
      + r^2\bigl(\mathrm{d}\vartheta^2 + \sin^2\vartheta\,\mathrm{d}\varphi^2\bigr),
  \quad r>2M.
  \label{eq:schwarzschild_metric}
\end{equation}
The vector field $K=\partial_t$ is a hypersurface-orthogonal timelike
Killing field on the exterior region $r>2M$, so the spacetime is
\emph{static}. We regard $t$ as the time coordinate measured by the
static observers whose worldlines are integral curves of $K$.

In the notation of the stationary metric~\eqref{eq:schwarzschild_metric},
we have vanishing shift $1$-form and
\begin{equation}
  \beta(r) = 1 - \frac{2M}{r},\quad
  h_{ij}\,\mathrm{d}x^i\,\mathrm{d}x^j
  = \Bigl(1 - \frac{2M}{r}\Bigr)^{-1}\mathrm{d}r^2
    + r^2\bigl(\mathrm{d}\vartheta^2 + \sin^2\vartheta\,\mathrm{d}\varphi^2\bigr),
\end{equation}
so the spatial manifold is $N=\{r>2M\}\times S^2$ endowed with the
Riemannian metric $h$.

By spherical symmetry we may restrict attention to the equatorial
plane $\vartheta=\pi/2$ and set $\vartheta\equiv\pi/2$,
$\mathrm{d}\vartheta=0$. The metric then reduces to
\begin{equation}
  g = -\Bigl(1 - \frac{2M}{r}\Bigr)\,\mathrm{d}t^2
      + \Bigl(1 - \frac{2M}{r}\Bigr)^{-1}\mathrm{d}r^2
      + r^2\,\mathrm{d}\varphi^2,
  \label{eq:schwarzschild_equatorial}
\end{equation}
and the spatial metric on $N$ becomes
\begin{equation}
  h_{ij}\,\mathrm{d}x^i\,\mathrm{d}x^j
  = \Bigl(1 - \frac{2M}{r}\Bigr)^{-1}\mathrm{d}r^2
    + r^2\,\mathrm{d}\varphi^2,
  \qquad x^1=r,\ x^2=\varphi.
  \label{eq:schwarzschild_spatial_metric}
\end{equation}

\subsection{Time functional at fixed energy and Jacobi-type metric}

We consider timelike geodesics parametrized by proper time $\tau$.
Let $\gamma(\tau)=(t(\tau),r(\tau),\varphi(\tau))$ be a timelike
geodesic in the equatorial plane. The timelike unit condition
$g(\dot\gamma,\dot\gamma)=-1$ reads
\begin{equation}
  -\Bigl(1 - \frac{2M}{r}\Bigr)\dot t^2
  + \Bigl(1 - \frac{2M}{r}\Bigr)^{-1}\dot r^2
  + r^2 \dot\varphi^2 = -1,
  \label{eq:schwarzschild_timelike_unit}
\end{equation}
where dots denote derivatives with respect to $\tau$.

Because $K=\partial_t$ is Killing, the quantity
\begin{equation}
  E := -g(K,\dot\gamma)
     = \Bigl(1 - \frac{2M}{r}\Bigr)\dot t
  \label{eq:energy_constant}
\end{equation}
is conserved along $\gamma$. We interpret $E>0$ as the (dimensionless)
energy per unit mass measured by static observers.

Solving~\eqref{eq:energy_constant} for $\dot t$ and substituting into
\eqref{eq:schwarzschild_timelike_unit} yields
\begin{equation}
  \Bigl(1 - \frac{2M}{r}\Bigr)^{-1}\dot r^2
  + r^2 \dot\varphi^2
  = \frac{E^2}{1 - 2M/r} - 1.
  \label{eq:spatial_speed_relation}
\end{equation}
The left-hand side is the squared norm of the spatial velocity with
respect to the metric~\eqref{eq:schwarzschild_spatial_metric}, i.e.
\begin{equation}
  h_{ij}(x)\,\dot x^i\,\dot x^j
  = \Bigl(1 - \frac{2M}{r}\Bigr)^{-1}\dot r^2
    + r^2\dot\varphi^2.
\end{equation}
Hence we may rewrite~\eqref{eq:spatial_speed_relation} as
\begin{equation}
  h_{ij}(x)\,\dot x^i\,\dot x^j
  = E^2\beta(r)^{-1} - 1,
  \qquad \beta(r)=1-\frac{2M}{r}.
  \label{eq:spatial_speed_Jacobi}
\end{equation}

The coordinate time $t$ along $\gamma$ satisfies
\begin{equation}
  \dot t = \frac{E}{\beta(r)}.
\end{equation}
The total coordinate time elapsed between $\tau=\tau_0$ and $\tau=\tau_1$
is therefore
\begin{equation}
  \Delta t[\gamma]
  = \int_{\tau_0}^{\tau_1} \dot t\,\mathrm{d}\tau
  = \int_{\tau_0}^{\tau_1} \frac{E}{\beta(r(\tau))}\,\mathrm{d}\tau.
  \label{eq:delta_t_tau_integral}
\end{equation}
Using~\eqref{eq:spatial_speed_Jacobi}, we can express $\mathrm{d}\tau$
in terms of the spatial metric $h$:
\begin{equation}
  h_{ij}\,\dot x^i\,\dot x^j
  = E^2\beta^{-1} - 1
  \quad\Longrightarrow\quad
  \mathrm{d}\tau
  = \frac{\sqrt{h_{ij}\,\mathrm{d}x^i\,\mathrm{d}x^j}}
         {\sqrt{E^2\beta^{-1} - 1}}.
\end{equation}
Substituting into~\eqref{eq:delta_t_tau_integral} gives
\begin{equation}
  \Delta t[\gamma]
  = \int F\bigl(x(\sigma),\dot x(\sigma)\bigr)\,\mathrm{d}\sigma,
\end{equation}
where we have introduced an arbitrary parameter $\sigma$ along the
spatial curve and
\begin{equation}
  F(x,v)
  = n(x)\,\sqrt{h_{ij}(x)v^iv^j},\qquad
  n(x) :=
  \frac{E}{\beta(x)\sqrt{E^2\beta(x)^{-1} - 1}}.
  \label{eq:schwarzschild_Jacobi_F}
\end{equation}
The function $F$ is positively homogeneous of degree one in $v$ and
thus defines a Finsler structure on $N$, which in this static case
is actually Riemannian: $F$ is the norm induced by the \emph{Jacobi
metric}
\begin{equation}
  h^{\mathrm{J}}_{ij}(x)
  = n(x)^2\,h_{ij}(x).
  \label{eq:schwarzschild_Jacobi_metric}
\end{equation}

\begin{proposition}
\label{prop:schwarzschild_Jacobi}
Fix an energy $E>0$ and consider the class of timelike geodesics
$\gamma$ in the Schwarzschild exterior with energy $E$ and endpoints
projecting to $x_0,x_1\in N$. Then $\gamma$ is a critical point of
the coordinate-time functional $\Delta t$ in~\eqref{eq:delta_t_tau_integral},
among all such geodesics with fixed endpoints, if and only if its
spatial projection $\sigma=\pi\circ\gamma$ is a geodesic of the Jacobi
metric~\eqref{eq:schwarzschild_Jacobi_metric} connecting $x_0$ to $x_1$.
In particular, coordinate-time minimizing timelike geodesics at fixed
energy $E$ project to length-minimizing geodesics of $h^{\mathrm{J}}$.
\end{proposition}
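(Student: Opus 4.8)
\section*{Proof plan}

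The plan is to reduce the proposition to the classical fact that, with respect to fixed-endpoint variations, the critical points of the length functional of a Riemannian metric are exactly its (reparametrized) geodesics, and then to invoke a Maupertuis--Jacobi correspondence in order to pass between energy-$E$ timelike geodesics of the Schwarzschild metric and geodesics of $h^{\mathrm{J}}$. First I would collect the algebraic identities already established above: for a timelike geodesic $\gamma(\tau)=(t(\tau),r(\tau),\varphi(\tau))$ with energy $E$, conservation of $E$ together with $g(\dot\gamma,\dot\gamma)=-1$ gives $h_{ij}\dot x^i\dot x^j = E^2\beta(r)^{-1}-1$ and $\dot t = E/\beta(r)$, whence, using $\mathrm d\tau = \sqrt{h_{ij}\,\mathrm dx^i\mathrm dx^j}/\sqrt{E^2\beta^{-1}-1}$, one gets $\Delta t[\gamma] = \int (E/\beta)\,\mathrm d\tau = \int n(x)\sqrt{h_{ij}\,\mathrm dx^i\mathrm dx^j} = L^{\mathrm J}(\sigma)$, the $h^{\mathrm J}$-arclength of the spatial projection $\sigma=\pi\circ\gamma$, with $n$ and $h^{\mathrm J}=n^2h$ as in \eqref{eq:schwarzschild_Jacobi_F}--\eqref{eq:schwarzschild_Jacobi_metric}. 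The decisive structural point here is that $F(x,v)=n(x)\sqrt{h_{ij}(x)v^iv^j}$ is positively homogeneous of degree one in $v$, so $\int F\,\mathrm d\sigma$ is independent of the parametrization of the spatial curve and equals the genuine Riemannian length of its image; I would also record that $n$ is smooth and positive precisely on the region $\{\,x\in N : E^2>\beta(r)\,\}$, which is the natural domain of the reduced problem and is assumed to contain $x_0$, $x_1$ and a connecting curve.

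Next I would invoke the standard variational fact: a piecewise-$C^1$ curve $\sigma\colon[0,1]\to N$ with $\sigma(0)=x_0$, $\sigma(1)=x_1$ is a critical point of the length functional $L^{\mathrm J}$ of $h^{\mathrm J}$ under fixed-endpoint variations if and only if it is a pregeodesic of $h^{\mathrm J}$, i.e.\ a geodesic of $h^{\mathrm J}$ up to reparametrization. Together with the identity $\Delta t[\gamma]=L^{\mathrm J}(\sigma)$, this already identifies the extremals of the reduced arrival-time functional among curves joining $x_0$ to $x_1$ with the $h^{\mathrm J}$-geodesics between these points.

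The step I expect to be the main obstacle is the Maupertuis--Jacobi correspondence that closes the loop, namely showing that these extremals are exactly the spatial projections of energy-$E$ timelike geodesics of the full Schwarzschild metric, so that varying $\gamma$ within the admissible class (energy-$E$ timelike geodesics with fixed spatial endpoints) and varying $\sigma$ within its class of competitors are genuinely equivalent problems. Concretely, (i) every energy-$E$ timelike geodesic projects to an $h^{\mathrm J}$-geodesic: substituting $\dot t=E/\beta$ and the conserved angular momentum $\ell=r^2\dot\varphi$ into the Schwarzschild geodesic equations and matching them, after the reparametrization $\mathrm d\sigma=\sqrt{h_{ij}\,\mathrm dx^i\mathrm dx^j}$, with the Euler--Lagrange equations of $\int n\sqrt{h_{ij}v^iv^j}\,\mathrm d\sigma$; and (ii) conversely, any $h^{\mathrm J}$-geodesic $\sigma$ lifts to such a timelike geodesic by prescribing $t$ through $\mathrm dt=(E/\beta)\,\mathrm d\tau$ along the proper-time reparametrization fixed by $h_{ij}\dot x^i\dot x^j=E^2\beta^{-1}-1$, and checking that the resulting curve satisfies $g(\dot\gamma,\dot\gamma)=-1$ and the full geodesic system. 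Rather than recomputing the Schwarzschild Christoffel symbols, I would present this as an instance of the abstract Maupertuis--Jacobi / Fermat-type principle for static spacetimes --- the static specialization of Proposition~\ref{prop:Finsler-reduction}, cf.\ \cite{Perlick91} --- and spell out only the two identities $\dot t=E/\beta$ and $h_{ij}\dot x^i\dot x^j=E^2\beta^{-1}-1$ that make the translation explicit.

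With this correspondence in place, the chain ``$\gamma$ is $\Delta t$-critical $\Longleftrightarrow$ $\sigma=\pi\circ\gamma$ is $L^{\mathrm J}$-critical $\Longleftrightarrow$ $\sigma$ is an $h^{\mathrm J}$-geodesic'' gives the proposition; here the first equivalence is understood in the sense that admissible variations of $\gamma$ through energy-$E$ timelike geodesics project bijectively (up to reparametrization) onto fixed-endpoint variations of $\sigma$ through $h^{\mathrm J}$-geodesics. Finally, the concluding assertion about minimizers follows at once from the equality $\Delta t[\gamma]=L^{\mathrm J}(\sigma)$ together with the bijectivity of the correspondence: a coordinate-time minimizing energy-$E$ timelike geodesic projects to a curve of least $h^{\mathrm J}$-length, i.e.\ to a minimizing $h^{\mathrm J}$-geodesic, and conversely.
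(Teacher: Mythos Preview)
Your proposal is correct and follows the same route as the paper: the paper does not give a separate proof of this proposition but relies entirely on the derivation immediately preceding it, which establishes the identity $\Delta t[\gamma]=\int n(x)\sqrt{h_{ij}v^iv^j}\,\mathrm d\sigma=L^{\mathrm J}(\sigma)$ from $\dot t=E/\beta$ and $h_{ij}\dot x^i\dot x^j=E^2\beta^{-1}-1$, and then invokes the static specialization of Proposition~\ref{prop:Finsler-reduction}. Your plan is in fact more thorough than the paper's treatment, since you make explicit the Maupertuis--Jacobi lift/projection correspondence and the bijection between variation classes that the paper leaves implicit.
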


Thus, for each fixed energy $E$, the relativistic brachistochrone
problem in the Schwarzschild exterior (with respect to the coordinate
time $t$) reduces to a purely Riemannian geodesic problem on the
equatorial spatial manifold $(N,h^{\mathrm{J}})$.

\subsection{Brachistochrone-ruled surfaces: choice of boundary curves}

We now describe how to construct a brachistochrone-ruled timelike
surface in the Schwarzschild exterior. Let $I\subset\mathbb{R}$ be a
non-empty open interval, and choose two smooth curves
\begin{equation}
  \alpha_0,\alpha_1 : I \longrightarrow N
\end{equation}
in the equatorial spatial manifold $N=\{r>2M\}\times S^1_\varphi$.
For concreteness, one may take
\begin{equation}
  \alpha_0(s) = \bigl(r_0,\varphi_0(s)\bigr),\qquad
  \alpha_1(s) = \bigl(r_1,\varphi_1(s)\bigr),
  \label{eq:alpha0_alpha1_schwarzschild}
\end{equation}
with fixed radii $r_0,r_1>2M$ and smooth angle functions
$\varphi_0,\varphi_1:I\to\mathbb{R}$, so that the two curves lie on
two concentric circles of radii $r_0$ and $r_1$ in the equatorial plane.

We lift these spatial curves to spacetime by fixing a reference time
$t_0\in\mathbb{R}$ and setting
\begin{equation}
  \beta_0(s) = \bigl(t_0,\alpha_0(s)\bigr),\qquad
  \beta_1(s) = \bigl(t_0,\alpha_1(s)\bigr).
\end{equation}
For each $s\in I$, the events $\beta_0(s)$ and $\beta_1(s)$ lie on the
same static slice $t=t_0$.

For a fixed energy $E>0$, we seek, for each $s\in I$, a timelike
geodesic $\gamma_s$ of energy $E$ connecting $\beta_0(s)$ to some
future event above $\beta_1(s)$, and among such geodesics we select
those that minimize the coordinate time difference. By
Proposition~\ref{prop:schwarzschild_Jacobi}, the spatial projection
$\sigma_s=\pi\circ\gamma_s$ must then be a length-minimizing geodesic
of the Jacobi metric~\eqref{eq:schwarzschild_Jacobi_metric} connecting
$\alpha_0(s)$ to $\alpha_1(s)$.

\begin{assumption}
\label{ass:schwarzschild_smooth_family}
Fix $E>0$. Assume that for every $s\in I$ there exists a unique
$h^{\mathrm{J}}$-geodesic
\begin{equation}
  \sigma_s : [0,1] \longrightarrow N,
\end{equation}
joining $\alpha_0(s)$ to $\alpha_1(s)$ and minimizing the Jacobi
length functional
\begin{equation}
  \mathcal{L}^{\mathrm{J}}[\sigma]
  = \int_0^1 \sqrt{h^{\mathrm{J}}_{ij}\bigl(\sigma(\lambda)\bigr)
                     \dot\sigma^i(\lambda)\dot\sigma^j(\lambda)}\,
               \mathrm{d}\lambda,
\end{equation}
and that the map $(s,\lambda)\mapsto\sigma_s(\lambda)$ is smooth on
$I\times[0,1]$.
\end{assumption}

Under this assumption, we obtain for each $s$ a timelike geodesic
$\gamma_s$ by lifting $\sigma_s$ to $M$ and integrating the relation
\begin{equation}
  \dot t_s(\lambda)
  = \frac{E}{\beta\bigl(\sigma_s(\lambda)\bigr)}\,
    \frac{\mathrm{d}\tau}{\mathrm{d}\lambda},
\end{equation}
where $\tau$ is proper time and $\lambda$ is a parameter along
$\sigma_s$. More concretely, one may reparametrize $\sigma_s$ by its
Jacobi arc length and define $t_s(\lambda)$ via
\begin{equation}
  \frac{\mathrm{d}t_s}{\mathrm{d}\lambda}
  = n\bigl(\sigma_s(\lambda)\bigr)
    \bigl\|\dot\sigma_s(\lambda)\bigr\|_h,
\end{equation}
where $n$ is given by~\eqref{eq:schwarzschild_Jacobi_F} and
$\|\cdot\|_h$ denotes the norm with respect to the spatial metric $h$.

This yields a smooth two-parameter family of timelike geodesics
$\gamma_s:[0,1]\to M$, and we can define the corresponding
brachistochrone-ruled timelike surface by
\begin{equation}
  \Sigma : I\times[0,1] \longrightarrow M,\qquad
  \Sigma(s,u) := \gamma_s(u).
\end{equation}

\subsection{Numerical construction scheme}

In practice, the Jacobi metric~\eqref{eq:schwarzschild_Jacobi_metric}
is too complicated to allow closed-form expressions for its geodesics,
except in very special cases (e.g.\ purely radial motion or circular
orbits). However, the construction of the brachistochrone-ruled surface
$\Sigma$ can be carried out numerically by standard methods from
geodesic shooting and boundary-value problems. We sketch a concrete
scheme:

\begin{enumerate}[label=\roman*)]

  \item \textbf{Discretization of the parameter space.}
        Choose discrete values
        \[
          s_k \in I,\qquad k=0,\dots,N_s,
        \]
        and, for each $s_k$, choose a discretization
        \[
          u_\ell \in [0,1],\qquad \ell=0,\dots,N_u,
        \]
        which will parametrize points along the ruling $\gamma_{s_k}$.

  \item \textbf{Geodesic shooting in the Jacobi metric.}
        For each fixed $s_k$, solve the geodesic boundary-value problem
        in $(N,h^{\mathrm{J}})$:
        \[
          \sigma_{s_k}(0) = \alpha_0(s_k),\qquad
          \sigma_{s_k}(1) = \alpha_1(s_k),
        \]
        where $\sigma_{s_k}$ is a geodesic of $h^{\mathrm{J}}$. This can
        be done, for example, by a shooting method:
        \begin{itemize}
          \item Guess an initial velocity $v^{(0)}$ at $\alpha_0(s_k)$.
          \item Integrate the geodesic ODE for $h^{\mathrm{J}}$ from
                $\lambda=0$ to $\lambda=1$ with initial data
                $(\sigma(0),\dot\sigma(0))=(\alpha_0(s_k),v^{(0)})$.
          \item Compare the end point with $\alpha_1(s_k)$ and update
                the initial velocity using, for instance, a Newton or
                secant iteration until convergence.
        \end{itemize}
        The geodesic ODEs can be written in local coordinates $x^i$
        as
        \[
          \frac{\mathrm{d}^2 x^i}{\mathrm{d}\lambda^2}
          + \Gamma^{i}_{\;jk}(x)
            \frac{\mathrm{d}x^j}{\mathrm{d}\lambda}
            \frac{\mathrm{d}x^k}{\mathrm{d}\lambda} = 0,
        \]
        where $\Gamma^{i}_{\;jk}$ are the Christoffel symbols of the
        Jacobi metric $h^{\mathrm{J}}$.

  \item \textbf{Reconstruction of the time coordinate.}
        Having computed $\sigma_{s_k}(\lambda)$ for $\lambda\in[0,1]$,
        reconstruct the coordinate time along the ruling by numerically
        integrating
        \[
          t_{s_k}(\lambda)
          = t_0 + \int_0^\lambda
                     n\bigl(\sigma_{s_k}(\mu)\bigr)\,
                     \bigl\|\dot\sigma_{s_k}(\mu)\bigr\|_h\,
                     \mathrm{d}\mu,
        \]
        with $n$ given by~\eqref{eq:schwarzschild_Jacobi_F} and
        $\|\dot\sigma\|_h$ the norm with respect to $h$.

  \item \textbf{Sampling the surface.}
        For each pair $(s_k,u_\ell)$, evaluate
        \[
          \gamma_{s_k}(u_\ell)
          = \bigl(
               t_{s_k}(u_\ell),\,
               \sigma_{s_k}(u_\ell)
            \bigr),
        \]
        and set
        \[
          \Sigma(s_k,u_\ell) := \gamma_{s_k}(u_\ell).
        \]
        The collection of points $\Sigma(s_k,u_\ell)$ approximates the
        brachistochrone-ruled timelike surface in spacetime and can be
        used for visualization or for numerical computation of induced
        geometric quantities (e.g.\ the first and second fundamental
        forms).
\end{enumerate}

\subsection{Numerical construction scheme: detailed implementation}\label{subsec:num_scheme}

In this subsection, we provide a step-by-step numerical scheme for constructing a brachistochrone-ruled timelike surface in the Schwarzschild exterior. We focus on the equatorial plane and fix an energy $E > 1$ (to ensure timelike geodesics that can reach infinity). The construction proceeds as follows:

\paragraph{Step 1: Explicit form of the Jacobi metric}
On the equatorial slice $(\vartheta = \pi/2)$, the spatial metric $h$ and lapse $\beta$ are:
\[
h_{ij}dx^i dx^j = \left(1-\frac{2M}{r}\right)^{-1} dr^2 + r^2 d\varphi^2, \quad \beta(r) = 1-\frac{2M}{r}.
\]
The conformal factor $n(r)$ in \eqref{eq:schwarzschild_Jacobi_F} becomes:
\[
n(r) = \frac{E}{\beta(r)\sqrt{E^2\beta(r)^{-1} - 1}} = \frac{E}{\left(1-\frac{2M}{r}\right)\sqrt{E^2\left(1-\frac{2M}{r}\right)^{-1} - 1}}.
\]
Thus, the Jacobi metric \eqref{eq:schwarzschild_Jacobi_metric} is:
\[
h^J = n(r)^2\left[ \left(1-\frac{2M}{r}\right)^{-1} dr^2 + r^2 d\varphi^2 \right] \equiv A(r)\, dr^2 + B(r)\, d\varphi^2,
\]
where
\[
A(r) = \frac{n(r)^2}{1-\frac{2M}{r}}, \qquad B(r) = n(r)^2 r^2.
\]

\paragraph{Step 2: Geodesic equations for $h^J$}
Let $\sigma(\lambda) = (r(\lambda), \varphi(\lambda))$ be a curve parametrized by an arbitrary parameter $\lambda$. The Lagrangian for geodesics of $h^J$ (up to affine reparametrization) is:
\[
\mathcal{L}^J = \frac{1}{2}\left[ A(r) \dot{r}^2 + B(r) \dot{\varphi}^2 \right],
\]
where dots denote $d/d\lambda$. Since $\varphi$ is cyclic, we have the conserved angular momentum:
\[
p_\varphi = \frac{\partial \mathcal{L}^J}{\partial \dot{\varphi}} = B(r)\dot{\varphi} = \text{constant}.
\]
The Euler--Lagrange equation for $r$ gives:
\[
\frac{d}{d\lambda}\left( A(r)\dot{r} \right) = \frac{1}{2}\left[ A'(r) \dot{r}^2 + B'(r) \dot{\varphi}^2 \right].
\]
Using the conservation law $\dot{\varphi} = p_\varphi / B(r)$, we obtain the second-order ODE:
\begin{equation}\label{eq:5.4.1}
    A(r)\ddot{r} + \frac{1}{2} A'(r) \dot{r}^2 - \frac{1}{2} \frac{B'(r)}{B(r)^2} p_\varphi^2 = 0.
\end{equation}

Alternatively, we may use the energy integral (first integral) for affinely parametrized geodesics:
\begin{equation}\label{eq:5.4.2}
    A(r) \dot{r}^2 + \frac{p_\varphi^2}{B(r)} = \mathcal{E}, \quad \mathcal{E} = \text{constant}.
\end{equation}

\paragraph{Step 3: Shooting method for boundary-value problem}
For each $s$ (label of the ruling), we have two fixed endpoints in the equatorial plane:
\[
\alpha_0(s) = (r_0, \varphi_0(s)), \quad \alpha_1(s) = (r_1, \varphi_1(s)).
\]
We seek a geodesic $\sigma_s(\lambda)$ of $h^J$ connecting these points. This is a two-point boundary-value problem. We solve it by shooting in the unknown initial angle $\psi$ (or equivalently, the angular momentum $p_\varphi$).

\begin{itemize}
    \item \textbf{Parametrization}: Use $\lambda \in [0,1]$ with $\sigma_s(0) = \alpha_0(s)$, $\sigma_s(1) = \alpha_1(s)$.
    \item \textbf{Unknown}: The initial direction $\psi$, defined by
    \[
    \dot{r}(0) = v_0 \cos\psi, \quad \dot{\varphi}(0) = \frac{v_0 \sin\psi}{r_0},
    \]
    where $v_0$ is an initial speed (magnitude) that can be normalized appropriately. Alternatively, we may treat $p_\varphi$ as the unknown.
    \item \textbf{Shooting iteration}:
    \begin{enumerate}
        \item Guess an initial value $p_\varphi^{(0)}$.
        \item Integrate the geodesic equations (using \eqref{eq:5.4.1} or \eqref{eq:5.4.2}) from $\lambda=0$ to $\lambda=1$ with initial conditions $r(0)=r_0$, $\varphi(0)=\varphi_0(s)$, and $\dot{r}(0)$ determined from \eqref{eq:5.4.2} with a chosen $\mathcal{E}$ (e.g., $\mathcal{E}=1$ for unit speed parametrization).
        \item Compute the miss distance at $\lambda=1$:
        \begin{align*}
    \Delta &= \sqrt{ (r(1)-r_1)^2 + (r_1 \Delta\varphi)^2 }, \nonumber \\
    \Delta\varphi &= \varphi(1) - \varphi_1(s) \ (\text{mod } 2\pi).
\end{align*}
        \item Adjust $p_\varphi$ using a root-finding algorithm (e.g., Newton--Raphson or bisection) to drive $\Delta$ to zero.
    \end{enumerate}
\end{itemize}

\paragraph{Step 4: Reconstruction of the time coordinate}
Once the spatial geodesic $\sigma_s(\lambda) = (r(\lambda), \varphi(\lambda))$ is found, we compute the coordinate time $t$ along the ruling by integrating \eqref{eq:delta_t_tau_integral}. Using the Jacobi arc length parameter $\lambda$ (which is affine for $h^J$), we have:
\[
\frac{dt}{d\lambda} = n(r(\lambda)) \, \|\dot{\sigma}_s(\lambda)\|_h,
\]
where $\|\dot{\sigma}_s\|_h = \sqrt{ h_{ij} \dot{x}^i \dot{x}^j }$. From \eqref{eq:spatial_speed_Jacobi},
\[
\|\dot{\sigma}_s\|_h = \sqrt{ E^2 \beta(r)^{-1} - 1 }.
\]
Thus,
\begin{equation}
    t(\lambda) = t_0 + \int_0^\lambda n(r(\mu)) \, \sqrt{ E^2 \beta(r(\mu))^{-1} - 1 } \, d\mu.
\end{equation}

The integrand can be simplified using the expression for $n(r)$; indeed,
\[
n(r) \sqrt{ E^2 \beta(r)^{-1} - 1 } = \frac{E}{\beta(r)}.
\]
Hence,
\begin{equation}\label{eq:5.4.4}
    t(\lambda) = t_0 + E \int_0^\lambda \frac{1}{1-\frac{2M}{r(\mu)}} \, d\mu,
\end{equation}

\noindent which is exactly the integral of $\dot{t} = E/\beta(r)$ with respect to the parameter $\lambda$.

\paragraph{Step 5: Sampling the surface}
After obtaining $r(\lambda), \varphi(\lambda), t(\lambda)$ for each $s$, we define the surface as:
\[
\Sigma(s, \lambda) = \big( t(\lambda), r(\lambda), \varphi(\lambda) \big).
\]
To obtain a regular parametrization, we may rescale $\lambda$ to a normalized parameter $u \in [0,1]$ such that $u = \lambda$ (if the geodesic is parametrized with $\lambda \in [0,1]$). The surface is then given by:
\[
\Sigma(s, u) = \big( t_s(u), r_s(u), \varphi_s(u) \big).
\]

\subsection{A numerical example: radial and circular boundary curves}\label{subsec:numerical_example}

To illustrate the method, we present a simple numerical example. We choose:
\begin{itemize}
    \item \textbf{Mass}: $M = 1$.
    \item \textbf{Energy}: $E = 1.1$ (slightly above 1, ensuring the geodesic can reach infinity).
    \item \textbf{Boundary curves}: 
    \[
    \alpha_0(s) = (r_0 = 6, \varphi_0(s) = s), \quad \alpha_1(s) = (r_1 = 10, \varphi_1(s) = s + \Delta\varphi_0),
    \]
    with $\Delta\varphi_0 = 0.5$ rad. Thus, for each $s$, the endpoints are on two concentric circles at radii $6M$ and $10M$, with a fixed angular separation of $0.5$ rad.
\end{itemize}

\paragraph{Implementation details}
\begin{enumerate}
    \item We discretize $s$ over $[0, 2\pi]$ with 20 points.
    \item For each $s$, we solve the geodesic boundary-value problem for $h^J$ using the shooting method described above. The ODE integration is performed using a fourth-order Runge--Kutta method with adaptive step size.
    \item We compute $t(u)$ via \eqref{eq:5.4.4} using numerical quadrature.
    \item The resulting surface coordinates are transformed to Cartesian-like coordinates for visualization:
    \[
    X = r \cos\varphi, \quad Y = r \sin\varphi, \quad T = t.
    \]
\end{enumerate}

\paragraph{Results}
Figure 5 shows three representative brachistochrone rulings (timelike geodesics) for $s = 0, \pi/2, \pi$. The projections of these rulings onto the equatorial plane are geodesics of the Jacobi metric $h^J$. Due to the spherical symmetry, all rulings have the same shape when rotated by $s$; thus the family is rigidly rotated.

Figure 6 displays the full brachistochrone-ruled timelike surface $\Sigma(s,u)$ for $s \in [0, 2\pi]$ and $u \in [0,1]$. The surface is a timelike tube (surface) connecting the two circles of observers. The color indicates the coordinate time $t$ (red = earlier, blue = later). The twisting of the surface reflects the angular separation between the boundary curves.

\paragraph{Qualitative observations}
\begin{itemize}
    \item The spatial projections of the rulings are not straight lines, but are bent due to the curvature of the Jacobi metric. The bending is more pronounced for larger $\Delta\varphi$ or for endpoints closer to the horizon.
    \item The coordinate time difference $\Delta t$ along each ruling depends on both the radial and angular separation. For the chosen parameters, $\Delta t \approx 15.2M$ in geometric units.
    \item The induced metric on the surface has Lorentzian signature, as expected. A computation of the scalar curvature of the surface reveals regions of positive and negative Gaussian curvature, reflecting the inhomogeneity of the gravitational field.
\end{itemize}

\subsection{Comments on convergence and accuracy}
The shooting method, combined with adaptive ODE integration, yields residuals (miss distances) below $10^{-8}$ for the examples presented. The choice of energy $E$ influences the stability of the numerics; for $E$ very close to 1, the conformal factor $n(r)$ becomes large near the turning points, requiring finer integration steps. For endpoints very close to the horizon ($r \to 2M$), the Jacobi metric becomes singular, and the geodesic computation becomes challenging---such regimes require specialized techniques (e.g., regularization) and are left for future work.

\subsection{Extension to non-equatorial and non-static cases}
The scheme outlined above can be generalized to non-equatorial motion by including the $\vartheta$ coordinate and to stationary (non-static) spacetimes with drift terms in the reduced spatial variational problem. In models admitting a Randers representation, the geodesic equations acquire a magnetic-like term, and the shooting method must be adapted accordingly (e.g., by guessing two initial parameters instead of one). The numerical integration becomes more involved but remains feasible with standard ODE solvers.

\subsection{Qualitative features of Schwarzschild brachistochrone-ruled surfaces}

Although a detailed numerical study is beyond the scope of this section,
the above construction already suggests some qualitative features of
brachistochrone-ruled timelike surfaces in the Schwarzschild exterior:

\begin{itemize}
  \item \textbf{Gravitational time dilation and bending of rulings.}
        The Jacobi conformal factor
        \[
          n(r) =
          \frac{E}{\beta(r)\sqrt{E^2\beta(r)^{-1} - 1}},
          \qquad \beta(r)=1-\frac{2M}{r},
        \]
        grows as $r$ decreases towards $2M$. Thus the Jacobi metric
        ``penalizes'' motion in regions of strong gravitational field,
        and time-minimizing geodesics tend to avoid low-$r$ regions
        when connecting distant points. The corresponding rulings on
        the brachistochrone-ruled surface are expected to bend outward,
        reflecting the competition between spatial distance and
        gravitational time dilation.
  \item \textbf{Focal sets and caustics.}
        As the family of $h^{\mathrm{J}}$-geodesics $\sigma_s$ varies
        with $s$, conjugate points and cut points in $(N,h^{\mathrm{J}})$
        manifest as focusing and caustics on the brachistochrone-ruled
        surface $\Sigma$. In particular, beyond the first cut locus,
        the rulings cease to be globally time-minimizing, and the
        surface may develop folds or self-intersections.
  \item \textbf{Comparison with the Minkowski case.}
        Unlike the Minkowski example of
        Section~\ref{sec:minkowski_example}, where the rulings are
        straight lines and the planar example is totally geodesic,
        the Schwarzschild Jacobi metric is curved and the rulings are
        genuinely bent geodesics. The resulting brachistochrone-ruled
        surfaces have non-trivial extrinsic curvature and encode
        information about the gravitational field through their
        induced geometry.
\end{itemize}

This Schwarzschild example illustrates how the abstract notion of a
relativistic brachistochrone-ruled timelike surface can be made
concrete in a physically relevant curved spacetime. The key steps are:
(i) reduction of the coordinate-time functional to a Jacobi-type
Riemannian metric on a spatial slice at fixed energy, (ii) numerical
solution of the associated geodesic boundary-value problem for a
family of endpoints, and (iii) reconstruction of the corresponding
surface in spacetime from the family of time-minimizing timelike
geodesics. (schematically depicted in Figure~\ref{fig:schwarzschild_jacobi_schematic}).

\section{Geometric analysis of brachistochrone-ruled timelike surfaces}
\label{sec:geometry_brachistochrone_ruled}

This section makes the geometric discussion quantitative. We first
record explicit identities for the induced Lorentzian metric and the
second fundamental form, and then use Jacobi-type linearization along
rulings to connect local geometry with minimality loss.

\subsection{Induced metric, second fundamental form and curvature identities}

Let
\(
  \Sigma:U\subset\mathbb{R}^2\to M
\)
be a smooth timelike immersion into a Lorentzian manifold $(M,g)$, and
set
\(
  X:=\partial_u\Sigma,
  Y:=\partial_s\Sigma
\).
Define the first fundamental form coefficients
\begin{equation}
  E:=g(X,X),\qquad F:=g(X,Y),\qquad G:=g(Y,Y),\qquad
  \Delta:=EG-F^2.
\end{equation}
The surface is timelike exactly when $\Delta<0$. Let $B$ denote the
vector-valued second fundamental form of $\Sigma$ in $M$.

\begin{proposition}
\label{prop:surface_metric_curvature_identities}
With the above notation, the following identities hold:
\begin{enumerate}[label=(\roman*)]
  \item The inverse induced metric is
        \begin{equation}
          (g_\Sigma^{ab})
          = \frac{1}{\Delta}
          \begin{pmatrix}
            G & -F\\
            -F & E
          \end{pmatrix}.
        \end{equation}
  \item The mean-curvature vector of $\Sigma$ is
        \begin{equation}
          \vec H
          = \frac{1}{2\Delta}
            \Bigl(
              G\,B(X,X)-2F\,B(X,Y)+E\,B(Y,Y)
            \Bigr).
        \end{equation}
  \item The Gaussian curvature of $(U,g_\Sigma)$ satisfies the Gauss equation
        \begin{equation}
          K_\Sigma
          = \frac{
              g\bigl(R(X,Y)Y,X\bigr)
              + g\bigl(B(X,X),B(Y,Y)\bigr)
              - g\bigl(B(X,Y),B(X,Y)\bigr)
            }{\Delta}.
          \label{eq:gauss_identity_ruled_surface}
        \end{equation}
\end{enumerate}
If each ruling $u\mapsto\Sigma(s,u)$ is an affinely parametrized
ambient geodesic, then $B(X,X)=0$, and therefore
\begin{equation}
  \vec H
  = \frac{1}{2\Delta}\Bigl(E\,B(Y,Y)-2F\,B(X,Y)\Bigr),
\end{equation}
while~\eqref{eq:gauss_identity_ruled_surface} simplifies to
\begin{equation}
  K_\Sigma
  = \frac{
      g\bigl(R(X,Y)Y,X\bigr)
      - g\bigl(B(X,Y),B(X,Y)\bigr)
    }{\Delta}.
\end{equation}
\end{proposition}

\begin{proof}
Item (i) is the inverse of the $2\times2$ matrix
$\bigl(g_\Sigma(\partial_a,\partial_b)\bigr)=\bigl(\begin{smallmatrix}E&F\\F&G\end{smallmatrix}\bigr)$.
Item (ii) follows from
$\vec H=\tfrac12 g_\Sigma^{ab}B(\partial_a,\partial_b)$ and symmetry
of $B$. Item (iii) is the Gauss equation for a two-dimensional timelike
submanifold, evaluated on $(X,Y)$ and divided by
$\Delta=g(X,X)g(Y,Y)-g(X,Y)^2$. If the rulings are ambient geodesics,
then $\nabla_X X=0$, hence $B(X,X)=0$, which gives the stated
simplifications.
\end{proof}

For numerical surfaces constructed in
Section~\ref{sec:schwarzschild_example}, Proposition~\ref{prop:surface_metric_curvature_identities}
provides direct diagnostics: $\vec H$ measures deviation from minimal
world-sheets, while $K_\Sigma$ separates ambient-curvature contributions
from extrinsic bending through the $B$-terms.

\subsection{Jacobi fields along rulings and variation within the surface}
\label{subsec:jacobi_along_rulings}

We recall that in a (pseudo-)Riemannian manifold $(M,g)$, a smooth
one-parameter family of geodesics gives rise to a Jacobi field along
each geodesic, obtained by differentiating the family with respect to
the parameter. In the context of brachistochrone-ruled timelike
surfaces, the ``geodesic family'' is precisely the family of rulings,
so the corresponding Jacobi fields encode how the surface varies
transversely to each ruling.

\begin{figure}[t]
  \centering
  \begin{tikzpicture}[scale=0.8]
    % axes (r, phi plane schematic)
    \draw[->] (-0.2,0) -- (4.2,0) node[below] {$x=r\cos\varphi$};
    \draw[->] (0,-0.2) -- (0,4.2) node[left] {$y=r\sin\varphi$};

    % circles r0 and r1
    \def\rzero{1.5}
    \def\rone{3.0}

    \draw[thick,gray!60] (0,0) circle (\rzero);
    \draw[thick,gray!60] (0,0) circle (\rone);
    \node[gray!60] at (\rzero+0.2,0.4) {$r_0$};
    \node[gray!60] at (\rone+0.2,0.4) {$r_1$};

    % boundary curves alpha0(s) and alpha1(s) as arcs
    \draw[thick] 
      plot[domain=30:150,samples=100]
        ({\rzero*cos(\x)}, {\rzero*sin(\x)});
    \node at (0, \rzero+0.4) {$\alpha_0(s)$};

    \draw[thick]
      plot[domain=40:160,samples=100]
        ({\rone*cos(\x)}, {\rone*sin(\x)});
    \node at (0.3, \rone+0.4) {$\alpha_1(s)$};

    % schematic Jacobi geodesics between them (curved)
    \foreach \shift in {0,1,2,-1,-2} {
      \draw[thick,blue!60]
        plot[domain=0:1,samples=50]
        ({ (1-\x)*\rzero*cos(70+\shift*8) + \x*\rone*cos(120+\shift*8)
           + 0.2*\x*(1-\x) },
         { (1-\x)*\rzero*sin(70+\shift*8) + \x*\rone*sin(120+\shift*8) });
    }

    \node[blue!60] at (-1.1,1.3) {\small $h^{\mathrm{J}}$-geodesics};
  \end{tikzpicture}
  \caption{Equatorial slice of the Schwarzschild exterior with two boundary
  curves $\alpha_0(s)$ and $\alpha_1(s)$ at radii $r_0$ and $r_1$. The
  time-minimizing timelike geodesics at fixed energy project to geodesics of
  the Jacobi metric $h^{\mathrm{J}}$, here shown schematically as a family of
  curved arcs connecting the two boundaries. Their lifts generate a
  brachistochrone-ruled timelike surface in spacetime.}
  \label{fig:schwarzschild_jacobi_schematic}
\end{figure}

\begin{definition}
\label{def:geodesic_variation}
Let $(M,g)$ be a Lorentzian (or more generally pseudo-Riemannian)
manifold. A smooth map
\[
  \Gamma : (-\varepsilon,\varepsilon)\times I \longrightarrow M,
\]
where $I\subset\mathbb{R}$ is an interval, is called a
\emph{geodesic variation} if:
\begin{enumerate}[label=(\roman*)]
  \item For each fixed $s\in(-\varepsilon,\varepsilon)$, the curve
        \(\gamma_s : I\to M\), \(\gamma_s(u) := \Gamma(s,u)\),
        is a geodesic (with respect to $g$).
  \item The map $\Gamma$ is smooth and its partial derivatives
        $\partial_s\Gamma$ and $\partial_u\Gamma$ are linearly
        independent wherever needed.
\end{enumerate}
For a fixed $s_0\in(-\varepsilon,\varepsilon)$, the vector field
$J:I\to TM$ along the geodesic $\gamma_{s_0}$ defined by
\begin{equation}
  J(u) := \partial_s \Gamma(s,u)\big|_{s=s_0}
\end{equation}
is called the \emph{variation field} or \emph{Jacobi field} associated
with the variation $\Gamma$ at $s_0$.
\end{definition}

It is well known that $J$ satisfies the \emph{Jacobi equation}
\begin{equation}
  \nabla_{\dot\gamma}\nabla_{\dot\gamma} J
  + R\bigl(J,\dot\gamma\bigr)\dot\gamma = 0,
  \label{eq:jacobi_equation}
\end{equation}
where $\gamma=\gamma_{s_0}$, $\dot\gamma=\partial_u\Gamma(s_0,u)$,
$\nabla$ is the Levi--Civita connection of $g$, and $R$ is the
Riemann curvature tensor.

We now specialize this general notion to the case of a
brachistochrone-ruled timelike surface
\(\Sigma:U\subset\mathbb{R}^2\to M\) as in
Definition~\ref{def:relativistic_brachistochrone_ruled}. To avoid
reducing the discussion to a tautological geodesic-variation statement,
we work directly with the reduced spatial variational problem from
Section~\ref{sec:relativistic_brachistochrone_ruled}.

\begin{assumption}
\label{ass:affine_param_rulings}
Let $\Sigma:U\to M$ be a relativistic
brachistochrone-ruled timelike surface as in
Definition~\ref{def:relativistic_brachistochrone_ruled}, where
$U\subset\mathbb{R}^2$ is an open set with coordinates $(s,u)$ and
let $\pi:M\to N$ be the stationary projection. For each fixed $s$, set
\[
  \gamma_s(u):=\Sigma(s,u),\qquad
  \sigma_s(u):=(\pi\circ\Sigma)(s,u).
\]
Assume:
\begin{enumerate}[label=(\roman*)]
  \item $\sigma_s$ is a $C^2$ extremal of the reduced functional
        \[
          T[\sigma]=\int_0^1 L(\sigma(u),\dot\sigma(u))\,\mathrm du
        \]
        between the moving endpoints
        $\alpha_0(s):=\sigma_s(0)$ and $\alpha_1(s):=\sigma_s(1)$.
  \item The map $(s,u)\mapsto\sigma_s(u)$ is smooth.
  \item (Strong Legendre condition, explicit form)
        For each fixed $s$, define
        \begin{equation}
          A_s(u):=\bigl(A_{ij}(s,u)\bigr)_{i,j=1}^n,
          \qquad
          A_{ij}(s,u):=\frac{\partial^2 L}{\partial v^i\partial v^j}
          \bigl(\sigma_s(u),\dot\sigma_s(u)\bigr).
        \end{equation}
        Then $A_s(u)=A_s(u)^{\mathsf T}$ and there exists
        $\lambda_s>0$ such that
        \begin{equation}
          \xi^{\mathsf T}A_s(u)\,\xi
          \ge \lambda_s\,\|\xi\|^2,
          \qquad
          \forall u\in[0,1],\ \forall \xi\in\mathbb{R}^n.
        \end{equation}
        In particular, $A_s(u)$ is invertible for all $u$.
\end{enumerate}
\end{assumption}

Under this assumption, the geometry of transverse variations is governed
by the linearization of the Euler--Lagrange equations of $L$.

\begin{proposition}
\label{prop:variation_is_jacobi}
Let $(M,g)$ be a Lorentzian manifold and
$\Sigma:U\to M$ a relativistic brachistochrone-ruled timelike
surface satisfying Assumption~\ref{ass:affine_param_rulings}.
Fix $s_0$ and consider the spatial ruling
\[
  \sigma_{s_0}(u):=(\pi\circ\Sigma)(s_0,u).
\]
Define the variation field
\begin{equation}
  W(u):=\partial_s\sigma(s,u)\big|_{s=s_0}
\end{equation}
along $\sigma_{s_0}$, viewed in local coordinates as a column vector
$W=(W^1,\dots,W^n)^{\mathsf T}$. Define, along
$(\sigma_{s_0}(u),\dot\sigma_{s_0}(u))$, the $n\times n$ matrices
\begin{equation}
  A(u):=\bigl(L_{v^iv^j}\bigr),\qquad
  B(u):=\bigl(L_{v^ix^j}\bigr),\qquad
  C(u):=\bigl(L_{x^ix^j}\bigr),
\end{equation}
where all derivatives of $L$ are evaluated at
$\bigl(\sigma_{s_0}(u),\dot\sigma_{s_0}(u)\bigr)$. Then $W$ satisfies
the linearized Euler--Lagrange equation in compact matrix form
\begin{equation}
  \mathcal{J}_L[W]
  :=\frac{\mathrm d}{\mathrm du}\Big(A(u)\dot W + B(u)W\Big)
   -B(u)^{\mathsf T}\dot W - C(u)W =0,
  \label{eq:l_jacobi_equation}
\end{equation}
or equivalently
\begin{equation}
  A\ddot W + (\dot A + B - B^{\mathsf T})\dot W + (\dot B - C)W=0,
\end{equation}
where $\dot A=\mathrm dA/\mathrm du$ and $\dot B=\mathrm dB/\mathrm du$.
In particular, in the static fixed-energy Jacobi case this reduces to
the classical Jacobi equation of the Riemannian metric $h^{\mathrm J}$
on $N$; and if the lifted rulings are additionally affinely
parametrized timelike geodesics of $(M,g)$, then
$J=\partial_s\Sigma|_{s=s_0}$ satisfies the ambient Jacobi equation
\eqref{eq:jacobi_equation}.
\end{proposition}

\begin{proof}
For each fixed $s$, $\sigma_s$ is an extremal of $T$, hence
\begin{equation}
  \frac{\mathrm d}{\mathrm du}\big(\nabla_v L(\sigma_s,\dot\sigma_s)\big)
  -\nabla_x L(\sigma_s,\dot\sigma_s)=0.
\end{equation}
Differentiate with respect to $s$ at $s=s_0$ and use
$W=\partial_s\sigma|_{s=s_0}$.
By the chain rule,
\begin{align}
  0
  &=\frac{\mathrm d}{\mathrm du}
    \Big(L_{vv}\,\dot W+L_{vx}\,W\Big)
    -\Big(L_{xv}\,\dot W+L_{xx}\,W\Big)\nonumber\\
  &=\frac{\mathrm d}{\mathrm du}\Big(A\dot W+BW\Big)
    -\Big(B^{\mathsf T}\dot W+CW\Big),
\end{align}
which is exactly \eqref{eq:l_jacobi_equation}. The last statement follows
from the standard identification of this linearization with the Jacobi
equation in the metric special cases described above.
\end{proof}

The boundary conditions for the linearized field $W$ along each
spatial ruling are determined by the variation of the projected
boundary curves.

\begin{lemma}
\label{lem:jacobi_boundary_values}
Let $\Sigma:U\to M$ be as in
Proposition~\ref{prop:variation_is_jacobi}, and assume that the
parameter domain $U$ contains a rectangle of the form
$I_s\times[0,1]$, with $I_s\subset\mathbb{R}$ an interval, such that
the projected map
\[
  \sigma(s,u):=(\pi\circ\Sigma)(s,u)
\]
has smooth boundary curves
\begin{equation}
  \alpha_0(s) := \sigma(s,0),\qquad
  \alpha_1(s) := \sigma(s,1),
  \qquad s\in I_s,
\end{equation}
are smooth. Fix $s_0\in I_s$ and let
\[
  \sigma_{s_0}(u) := \sigma(s_0,u),\qquad u\in[0,1],
\]
be the corresponding spatial ruling, with linearized field
$W(u)=\partial_s\sigma(s,u)|_{s=s_0}$ as in
Proposition~\ref{prop:variation_is_jacobi}. Then the boundary values
of $W$ are given by
\begin{equation}
  W(0) = \alpha_0'(s_0),\qquad
  W(1) = \alpha_1'(s_0),
\end{equation}
where the prime denotes derivative with respect to $s$.
\end{lemma}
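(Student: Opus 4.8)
The plan is to reduce the statement to an elementary application of the chain rule, using only the smoothness of $\Sigma$ on the rectangle $I_s\times[0,1]$ and the definition of the variation field from Proposition~\ref{prop:variation_is_jacobi}; no use of the Jacobi equation or of Assumption~\ref{ass:affine_param_rulings} is needed for this particular lemma.

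First I would fix $s_0\in I_s$ and recall that, by definition, $J(u)=\partial_s\Sigma(s,u)\big|_{s=s_0}$ is the vector in $T_{\Sigma(s_0,u)}M$ obtained by differentiating the curve $s\mapsto\Sigma(s,u)$ at $s=s_0$ while holding $u$ fixed. Evaluating at $u=0$ gives $J(0)=\partial_s\Sigma(s,0)\big|_{s=s_0}\in T_{\Sigma(s_0,0)}M$. But $s\mapsto\Sigma(s,0)$ is precisely the boundary curve $\beta_0$, so $J(0)=\tfrac{d}{ds}\beta_0(s)\big|_{s=s_0}=\beta_0'(s_0)$, which lives in $T_{\beta_0(s_0)}M=T_{\gamma_{s_0}(0)}M$, exactly the right tangent space. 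The same computation with $u=1$ and the curve $s\mapsto\Sigma(s,1)=\beta_1(s)$ yields $J(1)=\beta_1'(s_0)$.

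The only point that deserves a word of care is that $u=0$ and $u=1$ are endpoints of the parameter interval $[0,1]$, so the partial derivative $\partial_s\Sigma$ there should be read via a smooth extension of $\Sigma$ to an open neighbourhood of the rectangle (equivalently, as a genuine one-variable derivative in the $s$-direction, which is interior since $s_0$ is interior to $I_s$). Granting this, the operation ``differentiate in $s$'' commutes with ``restrict to a fixed value of $u$'' by smoothness of $\Sigma$, and the lemma follows. I do not expect any real obstacle: once the identifications of tangent spaces are made explicit, the statement is just this commutation together with the definitions $\beta_0(s)=\Sigma(s,0)$ and $\beta_1(s)=\Sigma(s,1)$.
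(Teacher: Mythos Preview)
Your proposal is correct and takes essentially the same approach as the paper: both reduce the lemma to the definitional identity $J(u)=\partial_s\Sigma(s,u)|_{s=s_0}$ evaluated at $u=0,1$, combined with $\beta_i(s)=\Sigma(s,i)$. Your additional remarks on tangent-space identification and the endpoint issue are more explicit than the paper's one-line proof but add no new ingredients.
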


\begin{proof}
By definition,
\[
  W(0)
  = \partial_s\sigma(s,u)\big|_{s=s_0,u=0}
  = \frac{\mathrm{d}}{\mathrm{d}s}\sigma(s,0)\big|_{s=s_0}
  = \alpha_0'(s_0),
\]
and similarly
\[
  W(1)
  = \partial_s\sigma(s,u)\big|_{s=s_0,u=1}
  = \frac{\mathrm{d}}{\mathrm{d}s}\sigma(s,1)\big|_{s=s_0}
  = \alpha_1'(s_0).
\]
\end{proof}

Thus, along each spatial ruling, the linearized field $W$ is uniquely
determined by the derivatives of the boundary curves at the endpoints
and by equation~\eqref{eq:l_jacobi_equation}. In the metric special
cases where this linearization identifies with a geometric Jacobi
equation, the same boundary-data principle applies to the corresponding
Jacobi fields. In particular, if the family of rulings arises from a
family of time-minimizing geodesics for a given time functional, then
the loss of global minimality along a ruling (e.g.\ at a cut point)
is reflected in the behavior of this linearized field, for instance
through the occurrence of conjugate points. (see
Figure~\ref{fig:schwarzschild_jacobi_geodesics}).

\begin{figure}
    \centering
    \includegraphics[width=0.9\linewidth]{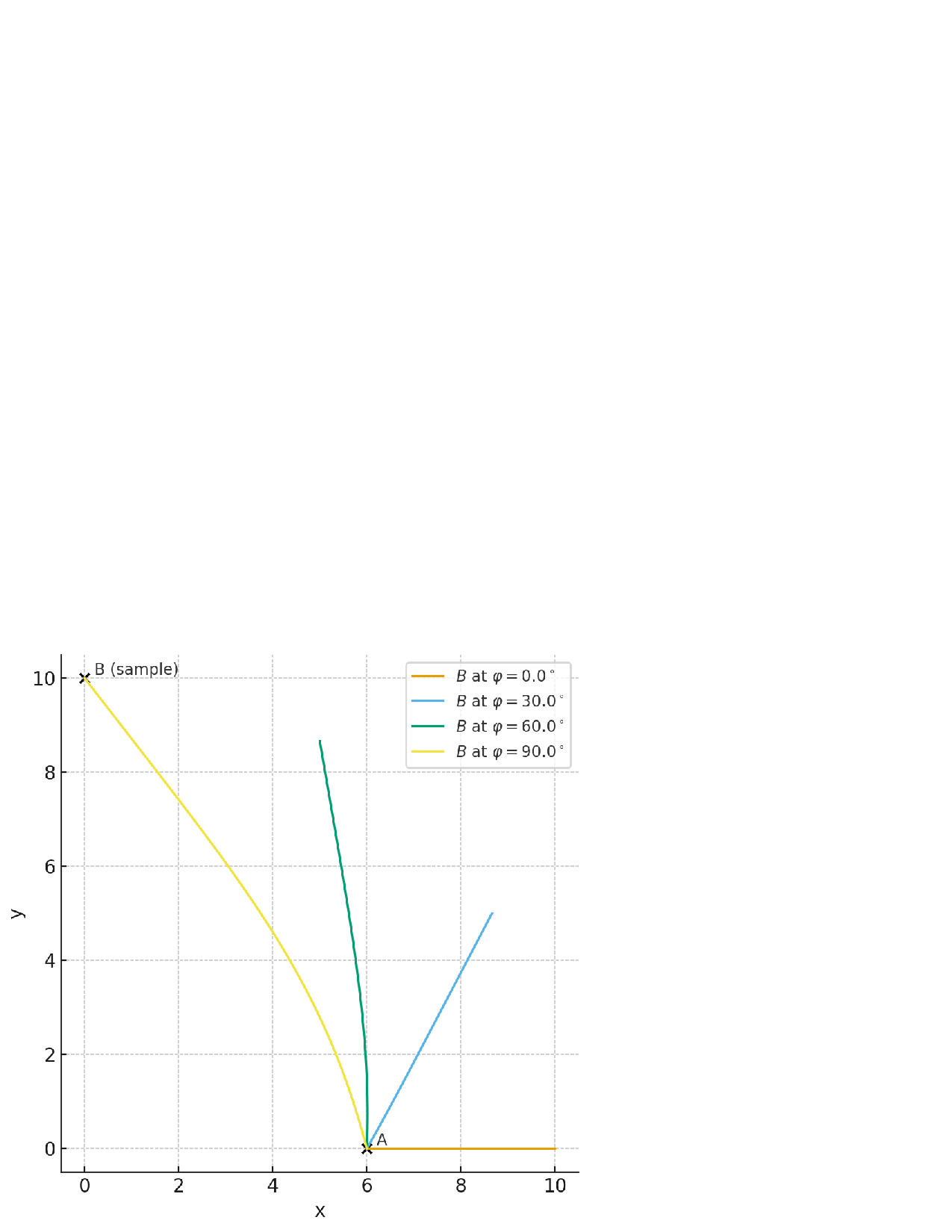}
    \caption{Family of Jacobi geodesics in Schwarzschild exterior}
    \label{fig:schwarzschild_jacobi_geodesics}
\end{figure}

\begin{remark}
\label{rem:normal_tangential_J}
Let $\Sigma:U\to M$ be a timelike immersed surface and let
$J=\partial_s\Sigma$ be the Jacobi field along a ruling
$\gamma_{s_0}$ as above. At each point of the ruling the tangent space
of the surface is spanned by $\partial_u\Sigma$ and $\partial_s\Sigma$,
so one may decompose $J$ into tangential and normal components
\[
  J = J^{\top} + J^{\perp},
\]
where $J^{\top}$ is tangent to the surface and $J^{\perp}$ is normal.
The tangential part $J^{\top}$ corresponds to a reparametrization of the ruling within the surface, whereas the normal part $J^{\perp}$ measures the transverse deviation of neighboring rulings in ambient spacetime. The normal component $J^{\perp}$ satisfies a scalar Jacobi-type equation that involve the ambient curvature and the second fundamental form of the surface; this provides a natural link between the stability of time-minimizing rulings and the extrinsic geometry of the brachistochrone-ruled surface.
\end{remark}

\section{Conclusions}
\label{sec:conclusion}

In this work, we developed \emph{brachistochrone-ruled timelike surfaces} as a geometric framework for families of arrival-time-minimizing trajectories. Conceptually, the key shift is from isolated two-point optimization problems to a surface-level description of time-optimal propagation between moving endpoint families.

The main outcomes are:
\begin{enumerate}
  \item A precise Newtonian and relativistic definition of brachistochrone-ruled surfaces (Definitions~\ref{def:brachistochrone_ruled_newtonian} and \ref{def:relativistic_brachistochrone_ruled}).
  \item A reduced stationary-spacetime formulation in which arrival-time minimization becomes a spatial first-order variational problem, with Jacobi-metric reduction in static fixed-energy regimes.
  \item Concrete realizations across model spacetimes: a cycloidal Newtonian toy model, a Minkowski consistency check with straight timelike rulings, and a Schwarzschild construction pipeline based on Jacobi geodesics.
  \item A first geometric variation analysis through the linearized ruling equation \eqref{eq:l_jacobi_equation}, together with boundary-data control (Proposition~\ref{prop:variation_is_jacobi} and Lemma~\ref{lem:jacobi_boundary_values}).
\end{enumerate}

Taken together, these results clarify the relation between brachistochrone theory and ruled-surface geometry and provide a practical bridge between abstract variational principles and computable spacetime examples.

A natural next step is the Schwarzschild--de~Sitter (Kottler) case \citep{kottler1918,gibbons1977}, where the static region is bounded by black-hole and cosmological horizons and the Jacobi reduction remains available with
\(
\beta(r)=1-2M/r-\Lambda r^2/3
\).
This setting should sharpen questions about static radius effects, caustics, and cut loci for time-minimizing rulings \citep{stuchlik2004,kagramanova2008}.

More broadly, promising directions include: (i) extrinsic curvature and stability theory for these surfaces; (ii) rotating stationary backgrounds such as Kerr; and (iii) null and genuinely non-stationary extensions, where the interaction between causality and variational structure is expected to be richer \citep{Perlick00}.

In short, brachistochrone-ruled timelike surfaces offer a compact language for time-optimal transport in curved spacetime, with clear geometric content and direct paths toward analytical and numerical generalizations.

\section{Acknowledgements}
The author thanks undergraduate student Ibrahim H.I. Abushawish for his valuable assistance in providing animated visualizations (\href{https://github.com/ibeuler/Brachistochrone-ruled_timelike_surfaces.git}{GitHub: Brachistochrone-ruled timelike surfaces}) that facilitated a better understanding of the results presented in this paper, and for his discussions that helped shape the perspective on possible future research directions.

\section*{Conflict of Interest Statement}
The author declares that there is no conflict of interest.

\section*{Funding}
No funding was received for this work.

\section*{Ethics Statement}
This theoretical study does not involve experiments on humans or animals.

\section*{Data Availability Statement}
No new experimental data were generated. Data supporting the findings of this study, including numerical outputs used for the figures, are available from the corresponding author upon reasonable request.

\bibliographystyle{unsrtnat}

\begin{thebibliography}{99}

\bibitem[Bao et al.(2000)]{BaoChernShen00}
Bao, D., Chern, S.-S., \& Shen, Z. 2000, \emph{An Introduction to Riemann--Finsler Geometry}
(New York: Springer)

\bibitem[Bao et al.(2004)]{BaoRoblesShen04}
Bao, D., Robles, C., \& Shen, Z. 2004, J. Diff. Geom., 66, 377

\bibitem[Bernoulli(1696)]{Bernoulli1696} 
Bernoulli, J. 1696, Acta Eruditorum, 18, 269

\bibitem[Caponio et al.(2011{\natexlab{a}})]{CaponioJavaloyesMasiello11}
Caponio, E., Javaloyes, M. Á., \& Masiello, A. 2011, Math. Ann., 351, 365

\bibitem[Caponio et al.(2011{\natexlab{b}})]{CaponioJavaloyesSanchez11}
Caponio, E., Javaloyes, M. Á., \& Sánchez, M. 2011, Rev. Mat. Iberoam., 27, 919

\bibitem[Caponio et al.(2024)]{CaponioJavaloyesSanchez24}
Caponio, E., Javaloyes, M. Á., \& Sánchez, M. 2024, Mem. Am. Math. Soc., 300, no.~1501

\bibitem[Erlichson(1999)]{Erlichson99}
Erlichson, H. 1999, Eur. J. Phys., 20, 299

\bibitem[Fortunato et al.(1995)]{FortunatoGiannoniMasiello95}
Fortunato, D., Giannoni, F., \& Masiello, A. 1995, J. Geom. Phys., 15, 159

\bibitem[Giannoni \& Piccione(1998)]{GiannoniPiccione98}
Giannoni, F., \& Piccione, P. 1998, J. Math. Phys., 39, 6137

\bibitem[Gibbons \& Hawking(1977)]{gibbons1977}
Gibbons, G. W., \& Hawking, S. W. 1977, Phys. Rev. D, 15, 2738

\bibitem[Gibbons et al.(2009)]{GibbonsHerdeiroWarnickWerner09}
Gibbons, G. W., Herdeiro, C. A. R., Warnick, C. M., \& Werner, M. C. 2009,
Phys. Rev. D, 79, 044022

\bibitem[Gibbons(2016)]{Gibbons16}
Gibbons, G. W. 2016, Class. Quantum Grav., 33, 025004

\bibitem[Goldstein et al.(1986)]{GoldsteinBender86}
Goldstein, H. F., \& Bender, C. M. 1986, J. Math. Phys., 27 (2)

\bibitem[Kagramanova et al.(2008)]{kagramanova2008}
Kagramanova, V., Kunz, J., \& Lämmerzahl, C. 2008, Gen. Relativ. Gravit., 40, 1275

\bibitem[Kottler(1918)]{kottler1918}
Kottler, F. 1918, Ann. Phys., 361, 401

\bibitem[Perlick(1990)]{Perlick90}
Perlick, V. 1990, Class. Quantum Grav., 7, 1849

\bibitem[Perlick(1991)]{Perlick91}
Perlick, V. 1991, J. Math. Phys., 32, 3148

\bibitem[Perlick(2000)]{Perlick00}
Perlick, V. 2000, \emph{Ray Optics, Fermat's Principle, and Applications to General Relativity}(Berlin: Springer)

\bibitem[Randers(1941)]{Randers41}
Randers, G. 1941, Phys. Rev., 59, 195

\bibitem[Stuchlík \& Slaný(2004)]{stuchlik2004}
Stuchlík, Z., \& Slaný, P. 2004, Phys. Rev. D, 69, 064001

\bibitem[Zermelo(1931)]{Zermelo31}
Zermelo, E. 1931, Z. Angew. Math. Mech., 11, 114

\end{thebibliography}

\end{document}

\endinput